\newtheorem{definition}{Definition}
\newtheorem{theorem}{Theorem}
\newtheorem{lemma}[theorem]{Lemma}
\newtheorem{fact}[theorem]{Fact}
\newtheorem{claim}[theorem]{Claim}
\newtheorem{corollary}[theorem]{Corollary}
\begin{document}

\title{On Protocols for Monotone Feasible Interpolation}

\author{Lukáš Folwarczný \orcidlink{0000-0002-3020-6443}\\
\small Institute of Mathematics of the Czech Academy of Sciences\\
\small Prague, Czech Republic\\[.5em]
\small Computer Science Institute of Charles University\\
\small Prague, Czech Republic\\
\small \tt \href{mailto:folwarczny@math.cas.cz}{folwarczny@math.cas.cz}}

\maketitle

\begin{abstract}
Feasible interpolation is a general technique for proving proof
complexity lower bounds. The monotone version of the technique converts, in its basic variant, lower bounds for monotone Boolean circuits separating two NP-sets to proof complexity lower bounds. In a generalized version of the technique, dag-like communication protocols are used instead of monotone Boolean circuits. We study three kinds of protocols and compare their strength.

Our results establish the following relationships in the sense of polynomial reducibility: Protocols with equality are at least as strong as protocols with inequality and protocols with equality have the same strength as protocols with a conjunction of two inequalities. Exponential lower bounds for protocols with inequality are known. Obtaining lower bounds for protocols with equality would immediately imply lower bounds for resolution with parities (R(LIN)).
\end{abstract}

\section{Introduction}
\label{sec:introduction}

Studying the strength of various propositional proof systems is the essence of proof complexity. Dag-like communication protocols come into play as a~tool to translate, via a technique called feasible interpolation, the task of proving lower bounds on the length of proofs into the realm of communication complexity. This paper studies variants of protocols which seem to have a~potential for producing new proof complexity lower bounds. Most importantly, lower bounds for a certain type of protocols, called protocols with equality in this paper, would immediately translate into lower bounds for the proof system resolution with parities (R(LIN)).

\subsection{Motivation and context}

Propositional proof system, in the sense of Cook and Reckhow \cite{cookreckhow}, is a~sound and complete system for propositional tautologies with proofs verifiable in polynomial time.  Examples of propositional proof systems include resolution, Frege systems, sequent calculus and cutting planes. Proof complexity is then the field studying the strength of various propositional proof systems; proof complexity is tightly connected with computational complexity (one of the fundamental open problems is equivalent to the question whether NP = coNP) and logic. A~reference for this field is the book by Krajíček~\cite{krajicekproofcomplexity}. A~lower bound is a~theorem stating that for a~certain proof system~$P$ and a~sequence of tautologies $\left\{ \phi_n \right\}$ the shortest length of a~proof of $\phi_n$ in $P$ is bounded from below by a~certain function of the size of $\phi_n$. Exponential lower bounds have been proven for various proof systems, Resolution, constant depth Frege systems, Cutting Planes, Polynomial Calculus (see Kraj\'{i}\v{c}ek's monograph~\cite{krajicekproofcomplexity}), but for other proof systems studied in the literature this is still an open problem. 

The feasible interpolation method was invented by Krajíček (idea formulated in \cite{krajicek1994}, applied in \cite{krajicek-feasibleinterpolation}). In the basic setup, feasible interpolation reduces the task of proving a~lower bound for a~proof system~$P$ to proving a~lower bound for Boolean circuits separating two NP-sets. In the case of monotone feasible interpolation, lower bounds for monotone Boolean circuits are enough. However, lower bounds for different objects than monotone Boolean circuits may be used as well. Limits of monotone feasible interpolation by monotone Boolean circuits were already considered in the aforementioned paper by Krajíček~\cite{krajicek-feasibleinterpolation}, Section~9. However, general limits of monotone feasible interpolation are not known. The first result when monotone feasible interpolation was used with another computational model than Boolean circuits is due to Pudlák~\cite{pudlak-interpol}: He defined a~generalization of monotone Boolean circuits called monotone real circuits (which were later proved by Rosenbloom~\cite{rosenbloom} to be strictly stronger than monotone Boolean circuits) and proved lower bounds for this model which lead, via monotone feasible interpolation, to lower bounds for the cutting planes proof system.

The history of the (dag-like communication) protocols considered in this paper starts with a paper by Karchmer and Wigderson \cite{karchmerwigderson} where the following theorem is proved:
For a~Boolean function~$f$, the minimum depth of a~Boolean circuit computing $f$ is equal to the communication complexity (that is the depth of a~protocol) of a~certain relation defined for~$f$. There is also a~monotone version of the relation for monotone circuits.

In classical communication complexity, the measure of protocols is their depth and hence one can without loss of generality consider protocols with the underlying graph being a~tree. Dag-like communication protocols, that is protocols with the underlying graph being a~directed acyclic graph, go back to Razborov~\cite{razborov-unprovability}. Razborov, inspired by Karchmer and Wigderson, used the size of certain protocols to characterize the size of circuits. See Pudlák~\cite{pudlak-extractingcomputations} or Sokolov~\cite{sokolov-daglike} for a~survey of this topic. We give more details on the origin of the protocols in \Cref{sec:protocols}. More recently, new types of protocols have been introduced~\cite{garggooskamathsokolov}.

It is now a well known open problem in proof complexity to prove lower bounds for the system which we denote, as in the Krajíček's book, by R(LIN). This problem may be seen as a step in solving another open problem which is  to prove lower bounds for the so called $AC^0[p]$-Frege proof systems. The system R(LIN) was introduced by Itsykson and Sokolov~\cite{itsyksonsokolov} who also prove exponential lower bounds for the tree-like version.
Itsykson and Sokolov call the system $\mathrm{Res}(\oplus)$ (or Res-Lin in a preliminary version of the paper). The system was inspired by the system R(lin) introduced by Raz and Tzameret~\cite{raztzameret}.

Lower bounds for protocols called \emph{protocols with equality} in this paper would directly lead to lower bounds for R(LIN). A~different approach which could work for this proof system is randomized feasible interpolation due to Krajíček~\cite{krajicek-random-fi}, studied also by Krajíček and Oliveira~\cite{krajicek-oliveira}.

\subsection{Our contribution}

In this section, all protocols have constant degree. Our contribution is the comparison of strength of protocols: Protocols with equality are at least as strong as protocols  with inequality (in the sense of polynomial reducibility). Furthermore, we establish that protocols with a conjunction of a constant number of inequalities have the same strength as protocols with equality if the constant is at least 2. Details are given in \Cref{sec:results}.

\subsection{Open problems}

Research into the dag-like communication complexity of KW games is motivated by the quest of proving lower bounds on proof systems that are stronger than those for which we already have lower bounds. Since R(LIN) is just on the border of what we can and cannot prove and lower bounds on protocols with equality would give us such a lower bound, the main problem is to prove a superpolynomial lower bound on such protocols.

Another motivation is the general question of how far we can get with feasible interpolation. This question is mentioned by Razborov~\cite{razborov-sigact} in his survey as one of the main challenges in proof complexity. Lower bounds based on monotone interpolation reduce the problem of proving the lower bound on the lengths of proofs to proving a lower bound on some monotone circuits (monotone Boolean circuits, monotone Real circuits, monotone span programs). Researchers in the area of proof complexity agree on the idea that in order to make progress, we have to study communication protocols instead of circuit models, because there may be no circuit type for some proof system. It seems that this is the case of R(LIN). Therefore proving lower bounds on protocols with equality seems to be the most viable approach to proving lower bounds on R(LIN) proofs.

Since we do not have lower bounds for protocols with equality, but we do have exponential lower bounds for protocols with inequality, we expect the former type of protocols to be stronger than the latter. However we do not have any partial Boolean function that would provide such a separation. Finding such a function would also shed more light on the problem of proving lower bounds on protocols with equality.

\subsection{Paper outline}

In \Cref{sec:protocols}, we define all the protocols concerned in this paper and explain their origin.
In \Cref{sec:results}, we state our results, discuss them and prove a part of
them.  In \Cref{sec:proofmainthm}, we prove the main theorem.
In \Cref{sec:applications}, we show how lower bounds for protocols would translate into lower bounds for the proof systems R(LIN).

\section{Protocols}
\label{sec:protocols}

\subsection{General protocols}

Karchmer and Wigderson~\cite{karchmerwigderson} used classical tree-like (i.e.\ the
underlying graph is a~tree)
communication protocols and their depth to characterize the depth of circuits.
Dag-like protocols (i.e.\ the underlying graph is a~directed acyclic graph) considered in
this paper go back to Razborov~\cite{razborov-unprovability} who was considering the size
of protocols to characterize the size of circuits. An explicit definition of these protocols was given by
Krajíček~\cite{krajicek-feasibleinterpolation}. Our definition is a~generalization of the
definition by Hrubeš and Pudlák~\cite{hrubespudlak-monotonecircuits}.
Unlike in~\cite{krajicek-feasibleinterpolation}, in this paper only the monotone version is
defined and the strategy function is omitted. 

The task of the monotone Karchmer-Wigderson game (the monotone KW game) for a~given partial monotone Boolean function
$f$ is: given $x\in f^{-1}(0)$ and $y \in f^{-1}(1)$, find an index $i$ such that $x_i
= 0 \wedge y_i = 1$.
(Observe that there is always at least one such index $i$.)

\begin{definition}
\label{def:pgeneral}
Let $n \geq 1$ be a~natural number. A \emph{protocol of degree $d$ with
feasibility relation} is a~directed acyclic graph $G = (V, E)$ and a~relation $F
\subseteq \{0,1\}^n \times \{0,1\}^n \times V$, such that
\begin{enumerate}[(i)]
\item $G$ has one source $v_0$ (a~node of in-degree zero) and the out-degree of every
vertex is at most $d$,
\item for every sink $\ell$ (a~node of out-degree zero), there exists an index
$i$ such that for every $x, y \in \{0,1\}^n$ it holds $(x, y, \ell) \in F$ iff $x_i = 0$ and $y_i = 1$.
\end{enumerate}
Let $f$ be a~partial monotone Boolean function in $n$~variables. We say that the
protocol \emph{solves the monotone KW game for $f$} (or simply \emph{solves} $f$), if for
every $x \in f^{-1}(0)$ and $y \in f^{-1}(1)$,
\begin{enumerate}[(a)]
\item $(x, y, v_0) \in F$,
\item for every $v \in V$ with $p \geq 1$ children $u_1 , \dots , u_p$, if
$(x, y, v) \in F$ then there exists $u_i$ with $(x,y,u_i)\in F$.
\end{enumerate}
The size of a~protocol is the number of vertices.
\end{definition}

We say that a~vertex $v$ is \emph{feasible for
$x$, $y$} if $(x,y,v)\in F$. By definition, the source is feasible for any $x \in
f^{-1}(0)$, $y \in f^{-1}(1)$.  The
protocol solves the monotone KW game in the following sense: Given $x \in f^{-1}(0)$ and
$y \in f^{-1}(1)$ and any vertex $v$ which is feasible for $x$, $y$ (it is crucial that this holds
for any feasible vertex, not just the source), we can find the solution to the KW game by
traversing the graph via feasible vertices down to sinks which give us the solution.

Another general definition was given by Garg et al.~\cite{garggooskamathsokolov}, Section 2.1. In their
definition, degree is fixed to 2 and general search problems are considered. If we
restrict our definition to degree 2 and their definition to the monotone KW game, the definitions
become equivalent.

\subsection{Protocols with inequality and equality}

The definition of what we call a~protocol with inequality was independently
introduced by Hrubeš and Pudlák~\cite{hrubespudlak-monotonecircuits} and
Sokolov~\cite{sokolov-daglike} (Sokolov considers only protocols of degree~2).
Before that, Krajíček~\cite{krajicek-interpolationbyagame} considered a different type of
protocols with inequality.

\begin{definition}
\label{def:pinequality}
A~\emph{protocol of degree $d$ with inequality} is a~protocol of degree $d$ with feasibility relation such that the relation $F$ may be expressed as follows: For each vertex $v \in V$, there is a~pair of functions $q_v, r_v \colon \{0, 1\}^n \rightarrow \mathbb{R}$
such that for all $x, y \in \{0, 1\}^n$, it holds  $(x, y, v) \in F$ iff $q_v(x) < r_v(y)$.
\end{definition}

Hrubeš and Pudlák~\cite{hrubespudlak-monotonecircuits} proved that the size of the minimum
protocol of degree~$d$ with inequality solving $f$ and the size of the minimum $d$-ary
monotone real circuit computing $f$ are equal (definition of monotone real circuits may be
found in \cite{pudlak-interpol} or \cite{hrubespudlak-monotonecircuits}). This implies
that the exponential lower bounds due to Pudlák~\cite{pudlak-interpol} and Haken and Cook~\cite{cook-haken} for monotone real circuits hold also for protocols with inequality.
Replacing inequality with equality or a~conjunction of inequalities, we obtain the following two definitions:

\begin{definition}
\label{def:pequality}
A~\emph{protocol of degree $d$ with equality} is a~protocol of degree $d$ with feasibility relation such that the relation $F$ may be expressed as follows: For each vertex $v \in V$ there is a~pair of functions $q_v, r_v \colon \{0, 1\}^n \rightarrow \mathbb{R}$
such that for all $x, y \in \{0,1\}^n$ it holds $(x, y, v) \in F$ iff $q_v(x) = r_v(y)$.
\end{definition}

\begin{definition}
\label{def:pconjunction}
A~\emph{protocol of degree $d$ with a~conjunction of $c$~inequalities}
is a protocol of degree $d$ with
feasibility relation $F$ such that the relation $F$ may be expressed as follows: For each vertex
$v \in V$, there are $c$~pairs of functions $q^{j}_v, r^{j}_v \colon \{0, 1\}^n
\rightarrow \mathbb{R}$ for $j \in [c]$. The functions satisfy for all $x, y \in \{0,1\}^n$
$(x, y, v) \in F$ iff $q^1_v(x) < r^1_v(y) \wedge \cdots \wedge q^{c}_v(x) < r^{c}_v(y)$.
\end{definition}

All three types of the protocols defined in this subsection are mentioned by Garg et al.
\cite{garggooskamathsokolov} who name the protocols after the shape of the feasible sets
in the protocol. For a~given $v$, the feasible set for $v$ is the set of all pairs
$(x, y)$ such that $(x,y,v) \in F$. For protocols with inequality, the feasible sets are
combinatorial triangles (definition may be found in \cite{garggooskamathsokolov}). For
protocols with equality, the feasible sets are block-diagonal. And for protocols with
a~conjunction of $c$~inequalities, the feasible sets are intersections of $c$~combinatorial
triangles.

\section{Results}
\label{sec:results}

We use the $\mathcal{O}$-notation for functions with several parameters. The precise meaning is the following:
$$g \in \mathcal{O}(f) \Leftrightarrow \exists c>0 \forall n_1 \in \mathbb{N} \dots \forall n_k \in \mathbb{N}: g(n_1, \dots, n_k) \leq cf(n_1, \dots, n_k) + c$$

Our first theorem compares protocols with inequality and protocols with equality.

\begin{theorem}
\label{thm:c1}
Let $P$ be a protocol of degree $d$ with inequality solving an $n$-bit partial
monotone Boolean function $f$. If the size of $P$ is $s$, then there exists a protocol of degree 2 with equality solving $f$ whose size is $\mathcal{O}(sn^{d+2})$.
\end{theorem}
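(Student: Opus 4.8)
The plan is to leave the underlying dag essentially in place and to replace every real comparison $q_v(x)<r_v(y)$ by a small sub-dag all of whose nodes test only equalities. Two ingredients are immediate. First, a sink condition $x_i=0\wedge y_i=1$ is already an equality condition: setting $q_\ell(x)=1$ if $x_i=0$ and $q_\ell(x)=-1$ otherwise, and $r_\ell(y)=1$ if $y_i=1$ and $r_\ell(y)=2$ otherwise, gives $q_\ell(x)=r_\ell(y)$ exactly when $x_i=0\wedge y_i=1$. Second — and this is the feature that makes equality nodes flexible — a conjunction of equality conditions is again an equality condition: if a node should be feasible iff $Q_1(x)=R_1(y)\wedge\dots\wedge Q_k(x)=R_k(y)$, I pair the values, letting $Q(x)=(Q_1(x),\dots,Q_k(x))$ and $R(y)=(R_1(y),\dots,R_k(y))$ (encoded as reals), so that $Q(x)=R(y)$ captures exactly the conjunction. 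Every feasibility relation built below will be such a conjunction.

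For an internal node $v$ I first discretise. The functions $q_v,r_v$ take at most $2^n$ values each; sorting all of them together assigns to every value an integer rank in $\{0,\dots,2^{n+1}-1\}$, an $(n+1)$-bit number, and $q_v(x)<r_v(y)$ holds iff $\operatorname{rank}(q_v(x))<\operatorname{rank}(r_v(y))$. Comparing two integers amounts to locating the most significant differing bit: the inequality holds iff there is a position $b$ where the two ranks agree on all higher bits while the $q$-rank carries a $0$ and the $r$-rank a $1$. Call this event $C_{v,b}$. Agreement of the high parts is an equality between functions of $x$ and of $y$, and the two bit constraints are conditions on $x$ alone and on $y$ alone, so by the pairing remark each $C_{v,b}$ is a single equality condition; moreover $C_{v,0},\dots,C_{v,n}$ are mutually exclusive with $\bigvee_b C_{v,b}\equiv(q_v(x)<r_v(y))$. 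I install one equality node $G_{v,b}$ per $b$, feasible exactly on $C_{v,b}$. Since $G_{v,b}$ is feasible only when $q_v(x)<r_v(y)$, reaching it already guarantees, by property (b) of the original protocol, that some child of $v$ is feasible; this is what rules out the speculative feasibility that would otherwise destroy condition (b).

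The hard part, and the main obstacle, is routing from $G_{v,b}$ to a feasible child while remaining binary and using only equality nodes: a node cannot point to all $\mathcal{O}(dn)$ entries $G_{u_i,b'}$ of its children, and one cannot funnel targets through intermediate equality nodes, because the feasibility of such a merging node would have to be a disjunction of block-diagonal sets, which is not block-diagonal. Instead I determine the correct child and bit by a deterministic binary computation inside the gadget. From $G_{v,b}$ I examine the children $u_1,\dots,u_p$ one at a time; for the child under examination I walk down its own rank comparison bit by bit (using the agreement-equalities above), and at each bit exactly one of three equality-testable cases occurs — the ranks still agree (descend to the next bit), they differ with $q<r$ (the child is feasible: jump to its entry $G_{u_i,b'}$ with the discovered bit $b'$), or they differ with $q>r$ or are fully equal (the child is infeasible: proceed to the next child). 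Crucially, every node of this gadget carries $C_{v,b}$ as an additional conjunct of its feasibility, so a gadget node can be feasible only when $q_v(x)<r_v(y)$, i.e. only when some child really is feasible; hence the "all children already discarded" state is unsatisfiable and the walk can never dead-end. This is precisely what yields condition (b) at every internal node, and the three cases binarise to out-degree $2$.

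For the size, a gadget node is specified by which child is currently examined, which bit of that child is being read, and, for each already-examined child, a constant-size record of how it was discarded: this is $\mathcal{O}(n)$ positions times $(n+\mathcal{O}(1))^{d}$ records, i.e. $\mathcal{O}(n^{d+1})$ nodes per entry $G_{v,b}$, hence $\mathcal{O}(n^{d+2})$ per node $v$ and $\mathcal{O}(s\,n^{d+2})$ overall. The source needs separate care, as it has no parent to supply the guarantee; here I use that the original source $v_0$ is feasible for every valid pair. I add an always-feasible new source and attach the same bit-walk for $v_0$: for valid inputs $q_{v_0}(x)<r_{v_0}(y)$ holds, so the $q>r$ case never arises and the walk reaches the correct $G_{v_0,b}$ without dead-ending. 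Throughout, the obstacle is the interaction of the degree-$2$ constraint with condition (b); the sequential, per-child bit-walk carrying the guarantee $C_{v,b}$ is the device that overcomes it, and it is also the source of the $n^{\Theta(d)}$ blow-up.
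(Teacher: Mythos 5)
Your argument is correct and achieves the stated bound, but it takes a genuinely different route from the paper. The paper proves \Cref{thm:c1} in two lines: it first invokes the Hrub\v{e}s--Pudl\'{a}k degree-reduction result (\Cref{l:hp}) to turn the degree-$d$ protocol with inequality into a degree-2 one of size $\mathcal{O}(sn^{d-2})$, and then applies \Cref{thm:main} with $c=1$ and $d=2$, paying a further factor $n^4$; all of the bit-comparison machinery lives in the proof of \Cref{l:technical}, where the search gadget is built as a \emph{tree} whose vertices carry the entire history of elementary conditions $\chi$ accumulated along the path. You instead simulate the degree-$d$ protocol directly, with no appeal to an external degree-reduction lemma, and you compress the gadget into a dag by merging nodes with identical feasibility conditions. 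This merging is legitimate precisely for the reason implicit in your accounting: since the walk examines children in a fixed order and scans bits from most significant downward, the accumulated conjunction is a function of the state (current child, current bit, and one discard record per already-examined child, namely ``differed with $>$ at position $k'$'' or ``all bits equal''), so the state space has size $\mathcal{O}\bigl(d\,n\,(n+\mathcal{O}(1))^{d-1}\bigr)$ per entry $G_{v,b}$ --- in fact slightly better than your stated $\mathcal{O}(n^{d+1})$. This is exactly the dag-compression the paper explicitly declines (``we could forget some bits \dots\ but it would complicate the proof''), opting instead for trees with exact recursive size formulas; what the paper's route buys is reuse of the general machinery needed anyway for $c\geq 2$, while yours buys self-containedness and a marginally tighter count. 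The remaining ingredients --- the most-significant-differing-bit decomposition of $<$ into mutually exclusive equality events, the pairing of split bit-equalities into a single equality, the exhaustive three-way tests binarized to out-degree 2, and the use of condition (b) of $P$ together with the accumulated failure witnesses to show that the ``all children discarded'' state is unsatisfiable --- coincide with the proof of \Cref{l:technical} specialized to $c=1$.

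One point you should make structurally explicit: an unsatisfiable dead-end cannot simply be left as a leaf, because \Cref{def:pgeneral}(ii) forces every sink to have feasibility exactly of the form $x_i=0 \wedge y_i=1$. You must either prune these branches, as the paper does by dropping the $>$ test (and the last bit test) for the final child, or give such nodes an arbitrary child, which is sound because condition (b) of \Cref{def:pgeneral} is vacuous at nodes that are never feasible for pairs from $f^{-1}(0)\times f^{-1}(1)$. Also note that a child of $v$ that is a sink is handled by your own machinery without the separate indicator encoding, since $x_i=0 \wedge y_i=1$ is just the one-bit inequality $x_i<y_i$.
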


A more general theorem compares protocols with a conjunction of $c$~inequalities and protocols with equality.

\begin{theorem}[main]
\label{thm:main}
Let $P$ be a protocol of degree $d$ with a~conjunction of $c$ inequalities solving an $n$-bit partial monotone Boolean function $f$. If the size of $P$ is $s$, then there exists a protocol of degree~2 with equality solving $f$ whose size is $\mathcal{O}(sn^{2cd+c-1})$.
\end{theorem}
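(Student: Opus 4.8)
The plan is to replace each vertex of $P$ by a small gadget of degree-$2$ equality nodes, using the dyadic decomposition of a combinatorial triangle into block-diagonal pieces. First I would normalise: for a single condition $q_v(x) < r_v(y)$, replace $q_v, r_v$ by their integer ranks in the common sorted order of the (at most $2^{n+1}$) attained values, so the comparison becomes a comparison of two $m$-bit integers with $m \le n+1$. Writing $A_\ell(x)$ (resp.\ $B_\ell(y)$) for the top $\ell$ rank bits of $q_v(x)$ (resp.\ $r_v(y)$) and $(q_v)_\ell(x)$, $(r_v)_\ell(y)$ for the $\ell$-th most significant rank bits, one has the disjoint decomposition
\[
\{(x,y) : q_v(x) < r_v(y)\} \;=\; \bigsqcup_{\ell=1}^{m} \bigl\{(x,y) : A_{\ell-1}(x) = B_{\ell-1}(y),\; (q_v)_\ell(x)=0,\; (r_v)_\ell(y)=1\bigr\},
\]
and each set on the right is block-diagonal, i.e.\ the feasible set of a single equality. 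Thus a combinatorial triangle is a union of at most $n+1$ block-diagonals. Since an intersection of equalities $\{f_j(x)=g_j(y)\}_j$ is again an equality (concatenate the $f_j$ into one function and the $g_j$ into another, encoding tuples as reals injectively), intersecting $c$ such decompositions shows that the feasible set of a conjunction of $c$ inequalities is a union of at most $(n+1)^c$ block-diagonals.

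With this in hand I would build the new protocol as follows. For each vertex $v$ of $P$ and each block-diagonal $D^v_a$ in the decomposition of $v$'s feasible set I create a node $(v,a)$ whose equality condition is exactly $D^v_a$. The point of decomposing is that $D^v_a \subseteq \{(x,y,v)\in F\}$, so the assumption that $P$ solves $f$ applies verbatim: for every valid $(x,y)\in D^v_a$ some child $u_i$ of $v$ is feasible, hence $(x,y)$ lies in one of the block-diagonals of some child. Connecting $(v,a)$ to \emph{all} child blocks $(u_i,b)$ therefore satisfies requirement (b) of \Cref{def:pgeneral} at $(v,a)$ immediately. A distinguished source with the identically-true condition feeds into all blocks $(v_0,a)$ of the source of $P$, and each sink of $P$, whose condition is already the single block $x_i=0 \wedge y_i=1$, becomes a sink of the new protocol, so the source and sink requirements hold.

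The remaining and hardest step is to bring the out-degree down to $2$. A node $(v,a)$ must route, among the $\mathcal{O}(dn^c)$ child blocks, to one that is feasible, knowing only that \emph{some} one is; this is an OR-routing problem, and the obstacle is that the complement (and a union) of a block-diagonal is not block-diagonal, so a naive binary tree or decision list either fails the descent property at its last node or needs non-equality conditions at its internal nodes. The plan is to pass to a common refinement: refine the blocks of $v$ and of all its children simultaneously, indexing a block by the most-significant-differing-bit positions of all comparisons involved (the $c$ inequalities of $v$ and the $c$ inequalities of each of the $\le d$ children). On this refinement every parent block is, on valid inputs, either contained in or disjoint from each child block, so routing reduces to reading off these positions bit by bit. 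Each bit test is implemented by the one clean degree-$2$ equality gadget that \emph{does} admit a partition: a node with condition ``top $\ell-1$ bits agree'' branches to ``top $\ell$ bits agree'' and to ``top $\ell-1$ agree and bit $\ell$ differs'', the latter branching to the two one-sided certificates $\{q\text{-bit }0,\, r\text{-bit }1\}$ and $\{q\text{-bit }1,\, r\text{-bit }0\}$, all single equalities that at each node partition the parent's feasible set, so (b) holds automatically.

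Finally I would count: each original vertex spawns a gadget whose nodes are indexed by the tuple of bit positions of the $\mathcal{O}(cd)$ relevant comparisons together with the current search location, and a careful accounting of these states gives $\mathcal{O}(n^{2cd+c-1})$ nodes per vertex, hence a protocol of size $\mathcal{O}(sn^{2cd+c-1})$ of degree $2$ with equality conditions; the case $c=1$ recovers \Cref{thm:c1} with a sharper count. I expect the degree reduction / OR-routing to be the crux: making every feasibility path terminate at a feasible child using only block-diagonal conditions is exactly what forces the refinement and thus the polynomial blow-up, and coaxing the bookkeeping to yield precisely the stated exponent is the part requiring care.
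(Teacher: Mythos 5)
Your proposal is essentially the paper's own proof: the rank normalization to $n$-bit numbers, the decomposition of each inequality by its most significant differing bit into $n$ block-diagonal pieces (hence the $n^c$ witness-indexed skeleton copies per vertex, the paper's $\Phi^I_v$ and $v(I)$), and the degree-2 routing trees between skeleton nodes whose labels accumulate the search history as conjunctions of bit equalities, with the two-step ``$=$ versus $\neq$, then $<$ versus $>$'' branching, are exactly the construction of \Cref{l:technical} and its trees $T^\psi_{i,j,k}$. The differences are only presentational --- you phrase the paper's sequential child-by-child, conjunct-by-conjunct, bit-by-bit search as a ``common refinement'' indexed by most-significant-differing-bit positions, and you assert the $\mathcal{O}(n^{2cd+c-1})$ count (including the treatment of the forced last child, which the paper handles by skipping its final bit test) rather than carrying out the inductive size estimate.
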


Hrubeš and Pudlák \cite{hrubespudlak-monotonecircuits} prove how to reduce the degree of protocols with inequality and correspondingly monotone real circuits. (We use their result as a part of the proof of \Cref{thm:c1}.)

\begin{lemma}[\cite{hrubespudlak-monotonecircuits}, Corollary 6 (ii)]
\label{l:hp}
Let $P$ be a~protocol of degree $d$ with inequality solving an $n$-bit partial monotone Boolean function $f$. If the size of $P$ is $s$, then there exists a~protocol of degree 2 with inequality solving $f$ whose size is $\mathcal{O}(sn^{d-2})$.
\end{lemma}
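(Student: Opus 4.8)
The plan is to reduce the out-degree one unit at a time: starting from a protocol of degree $d$, I would produce a protocol of degree $d-1$ whose size is larger by a factor $O(n)$, and iterate this $d-2$ times to reach degree $2$, giving the claimed bound $\mathcal{O}(sn^{d-2})$. A single phase acts locally at each vertex $v$ of maximal out-degree: I keep the feasibility relation $q_v(x)<r_v(y)$ of $v$ and of all other vertices intact and only rewire the out-edges of $v$, replacing two of its children $u$, $u'$ by a small gadget of out-degree-$2$ vertices that leads to $u$ and $u'$. By \Cref{def:pgeneral} the protocol solves $f$ iff the source is feasible and feasibility propagates to a feasible child at every vertex, so the correctness of a phase reduces to one local requirement: the gadget, viewed as a sub-protocol entered from $v$ and with exits $u$ and $u'$, must route every pair $(x,y)\in(f^{-1}(0)\times f^{-1}(1))$ that is feasible for $u$ or for $u'$ down to a genuinely feasible child, and must never declare feasible a pair that neither $u$ nor $u'$ can serve.

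The core difficulty is that, by \Cref{def:pinequality}, the feasible set of every single vertex is one combinatorial triangle $\{(x,y):q(x)<r(y)\}$, whereas the gadget must handle the union $T_u\cup T_{u'}$ of two triangles, which in general is not a triangle. Thus no single vertex can both contain the union (needed so that feasibility still propagates into the gadget from $v$) and be contained in it (needed so that feasibility propagates out to a truly feasible exit). The tool I would use to break this deadlock is the observation that unions of triangles collapse when a coordinate is shared: $\{q(x)<r_1(y)\}\cup\{q(x)<r_2(y)\}=\{(x,y):q(x)<\max(r_1(y),r_2(y))\}$, and dually two triangles with a common $r$ combine through $\min$ on the $q$ side. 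The plan is therefore to decompose $T_u\cup T_{u'}$, relativised to the domain $D=f^{-1}(0)\times f^{-1}(1)$, into sub-triangles ordered by thresholds of $q_u(x),q_{u'}(x)$ on Alice's side and $r_u(y),r_{u'}(y)$ on Bob's side, so that their partial unions collapse via $\max/\min$ into genuine triangles; these partial unions are then strung along a caterpillar of out-degree-$2$ vertices whose two exits point to $u$ and $u'$, each internal vertex carrying a single legitimate inequality and each feasible pair being forwarded to the branch that actually serves it.

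The step I expect to be the main obstacle is bounding the number of pieces in this decomposition by $O(n)$ rather than by the a priori $2^{n}$ distinct values of the comparison functions. This is where the monotone Karchmer–Wigderson structure must be exploited: on the admissible domain $D$ one should be able to pass to a normal form in which the relevant thresholds of $q_v,r_v$ take only $O(n)$ effective values, so that only $O(n)$ breakpoints arise when the two triangles are interleaved. Granting such a normal form, each phase replaces a vertex by a gadget of $O(n)$ vertices of out-degree $2$, leaving all other feasibility relations unchanged; verifying conditions (a) and (b) of \Cref{def:pgeneral} for the rewired protocol is then routine, and multiplying the $O(n)$ blow-ups over the $d-2$ phases yields a degree-$2$ protocol with inequality of size $\mathcal{O}(sn^{d-2})$ solving the same $f$.
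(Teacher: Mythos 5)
There is a genuine gap here, and it is structural rather than a missing estimate: the local, phase-by-phase rewiring you propose cannot work, regardless of how the gadget is built or how large it is allowed to be. Write $T_w$ for the feasible set of a vertex $w$ restricted to the domain $D=f^{-1}(0)\times f^{-1}(1)$. Condition (b) of \Cref{def:pgeneral} must hold at \emph{every} vertex for \emph{every} pair in $D$, not only along reachable paths; hence, by induction upward from the gadget's sinks $u$, $u'$, every gadget vertex $g$ --- in particular its single entry $w$, the new child of $v$ --- satisfies $T_g\subseteq T_u\cup T_{u'}$. On the other hand, condition (b) at $v$ forces $T_w$ to contain every pair of $T_v\cap D$ lying in no other child of $v$. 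Now observe that a combinatorial triangle containing two ``diagonal'' pairs $(x,y)$ and $(x',y')$ must contain $(x,y')$ or $(x',y)$: otherwise $q(x)<r(y)\le q(x')<r(y')\le q(x)$, a contradiction. It is easy to arrange a degree-$3$ vertex $v$ with children $u_1,u_2,u_3$ and domain pairs $P_1=(x,y)\in (T_{u_1}\cap T_v)\setminus T_{u_3}$ and $P_2=(x',y')\in (T_{u_2}\cap T_v)\setminus T_{u_3}$ while $(x,y'),(x',y)\in D\setminus(T_{u_1}\cup T_{u_2})$ (the triangle $T_v$ necessarily picks up one off-diagonal pair, which in the original protocol is served by $u_3$ --- all consistent). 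Merging $u_1,u_2$ then forces $P_1,P_2\in T_w$, hence one off-diagonal pair lies in $T_w$, contradicting $T_w\subseteq T_{u_1}\cup T_{u_2}$. Your max/min collapsing does not escape this: it applies only when two triangles share the comparison function on one side, which distinct children do not, and the caterpillar's internal structure is irrelevant to the downward induction above. In effect, your own correctness requirement --- the entry must catch all of $T_u\cup T_{u'}$ and never accept a pair outside it --- asks the union of two triangles to be a triangle, the very fact you note is false.

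Independently, the ``$O(n)$ effective values'' normal form that you grant is the entire mathematical content of the step, and nothing in the monotone KW structure supplies it: this paper's own normalization (\Cref{sec:proofmainthm}) only brings the comparison functions into $\{0,\dots,2^n-1\}$, and a vertex comparing the numerical value of $x$ with that of $y$ has $2^n$ pairwise inequivalent thresholds on a generic domain, so no vertex-local collapse to $O(n)$ values can be expected. Note also that the paper does not prove \Cref{l:hp} at all; it imports it from Hrubeš and Pudlák \cite{hrubespudlak-monotonecircuits}, whose argument is global rather than local surgery: they first establish the exact size-preserving correspondence between degree-$d$ protocols with inequality and $d$-ary monotone real circuits, and then reduce arity on the circuit side, simulating each $d$-ary monotone real gate by a fan-in-two monotone real circuit of size $\mathcal{O}(n^{d-2})$ by exploiting monotonicity of the wire functions over the Boolean cube; the comparison functions of the resulting degree-$2$ protocol are fresh monotone combinations and are never sandwiched between old feasible sets, which is precisely how the obstruction above is bypassed. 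A smaller quantitative point: even granting your phases, iterating a blow-up of $Cn$ over $d-2$ phases yields $s(Cn)^{d-2}$, which under the paper's uniform $\mathcal{O}$-convention is $\mathcal{O}(sn^{d-2})$ only when $d$ is treated as a constant.
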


Our results enable a similar reduction for protocols with equality.

\begin{corollary}
\label{cor:degreereduction}
Let $P$ be a~protocol of degree $d$ with equality solving an $n$-bit partial monotone Boolean function $f$. If the size of $P$ is $s$, then there exists a~protocol of degree~2 with equality solving $f$ whose size is $\mathcal{O}(sn^{4d+1})$.
\end{corollary}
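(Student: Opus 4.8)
The plan is to realise a protocol with equality as a special instance of a protocol with a conjunction of two inequalities, whereupon \Cref{thm:main} with $c=2$ does the rest. Given the protocol $P$ of degree $d$ with equality, I would keep the underlying graph $G$ and the associated functions $q_v, r_v$ from \Cref{def:pequality} untouched, and only reinterpret each feasibility condition $q_v(x) = r_v(y)$ as a conjunction of two strict inequalities in the sense of \Cref{def:pconjunction}. This incurs no change in size, so the whole task reduces to the local, per-vertex rewriting of an equality.

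The key step is the passage from an equality to a conjunction of \emph{strict} inequalities. The obvious identity $q_v(x)=r_v(y) \iff q_v(x) \le r_v(y) \wedge r_v(y) \le q_v(x)$ produces non-strict inequalities, which are not of the form required by \Cref{def:pconjunction}. To fix this I would exploit that the domain $\{0,1\}^n$ is finite, so for each vertex $v$ the quantity $\delta_v = \min\{\,|q_v(x)-r_v(y)| : x,y \in \{0,1\}^n,\ q_v(x)\neq r_v(y)\,\}$ is a strictly positive real. Perturbing by $\delta_v/2$ converts each non-strict inequality into an equivalent strict one over this finite domain: one checks that $q_v(x)\le r_v(y)$ holds iff $q_v(x)-\delta_v/2 < r_v(y)$, and symmetrically $r_v(y)\le q_v(x)$ holds iff $r_v(y)-\delta_v/2 < q_v(x)$. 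Setting $q^1_v(x)=q_v(x)-\delta_v/2$ and $r^1_v(y)=r_v(y)$ for the first inequality, and $q^2_v(x)=-q_v(x)$ and $r^2_v(y)=-r_v(y)+\delta_v/2$ for the second, recasts the pair in the required shape $q^1_v(x)<r^1_v(y)\wedge q^2_v(x)<r^2_v(y)$, with $(x,y,v)\in F$ iff both strict inequalities hold.

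This exhibits $P$, without any change in its size or its graph, as a protocol of degree $d$ with a conjunction of $c=2$ inequalities solving the same $f$. Applying \Cref{thm:main} with $c=2$ then yields a protocol of degree $2$ with equality solving $f$ whose size is $\mathcal{O}(s\, n^{2cd+c-1}) = \mathcal{O}(s\, n^{4d+1})$, as claimed. The only genuine subtlety—and hence the step I would guard most carefully—is the strictness conversion above, in particular the observation that the finiteness of the domain guarantees a positive gap $\delta_v$ that lets one trade each $\le$ for a strict $<$ while preserving the truth value exactly on $\{0,1\}^n\times\{0,1\}^n$; everything else is a direct substitution into the already-established main theorem.
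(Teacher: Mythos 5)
Your proposal is correct and is essentially the paper's own proof: the paper likewise converts $P$, with no change in graph or size, into a protocol of degree $d$ with a conjunction of two inequalities (its \Cref{fact:equalitybytwoinequalities}, stated without proof) and then applies \Cref{thm:main} with $c=2$, so your $\delta_v$-gap argument merely supplies the strictness details that the paper leaves implicit in that fact. The only cosmetic point is to set $\delta_v$ to an arbitrary positive value (e.g.\ $1$) at any vertex where $q_v(x)=r_v(y)$ holds for all pairs, since there the minimum defining $\delta_v$ ranges over the empty set.
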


\subsection{Discussion}

We consider the theorems to be most relevant for the cases when $d = \mathcal{O}(1)$ and $c = \mathcal{O}(1)$.
In this case, we interpret the results as comparison of the strength of protocols in the sense of polynomial reducibility.

We cannot expect interesting results for very large degree because of the following simple fact:

\begin{fact}
\label{fact:degreen}
For every $n$-bit partial monotone Boolean function $f$, there is a~protocol of degree~$n$ with inequality solving $f$ whose size is $n+1$.
\end{fact}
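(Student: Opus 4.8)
The plan is to exhibit an explicit protocol of the required size rather than to argue abstractly. I would build a star-shaped graph with a single source $v_0$ and exactly $n$ sinks $\ell_1, \dots, \ell_n$, with an edge from $v_0$ to each $\ell_i$. This gives $n+1$ vertices and out-degree exactly $n$ at the source (and zero elsewhere), so both the size bound and the degree bound $d = n$ are met immediately, the graph is trivially acyclic, and it has the single source demanded by condition (i) of \Cref{def:pgeneral}.

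Next I would define the feasibility relation at each vertex as a single strict inequality, as required by \Cref{def:pinequality}. For the sink $\ell_i$ I would take $q_{\ell_i}(x) = x_i$ and $r_{\ell_i}(y) = y_i$; since these are $\{0,1\}$-valued, $q_{\ell_i}(x) < r_{\ell_i}(y)$ holds exactly when $x_i = 0$ and $y_i = 1$, which is precisely condition (ii) of \Cref{def:pgeneral} with index $i$. For the source I would make it feasible for every pair by choosing constant functions, for instance $q_{v_0} \equiv 0$ and $r_{v_0} \equiv 1$, so that $q_{v_0}(x) < r_{v_0}(y)$ always holds.

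It then remains to check that this protocol solves $f$. Condition (a) is immediate, since the source is feasible for every pair and in particular for every $x \in f^{-1}(0)$ and $y \in f^{-1}(1)$. For condition (b), the only vertex with children is the source, and it is always feasible, so I must show that for every such $x, y$ some sink $\ell_i$ is feasible, i.e.\ that there is an index $i$ with $x_i = 0$ and $y_i = 1$. This is exactly the observation recorded when the monotone KW game was introduced: if no such index existed, then $y_i = 1$ would force $x_i = 1$ for every $i$, so $y \le x$ coordinatewise, and monotonicity of $f$ would give $1 = f(y) \le f(x) = 0$, a contradiction.

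I expect there to be no serious obstacle here; the one substantive step is this short monotonicity argument, and the only point requiring a little care is confirming that the sink condition ``$x_i = 0 \wedge y_i = 1$'' really is captured by a single strict inequality, which it is precisely because the coordinates are Boolean. Everything else is bookkeeping against the two definitions.
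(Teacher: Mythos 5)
Your proposal is correct, and it is exactly the evident construction the paper has in mind: the paper states this fact without proof as ``simple,'' and the intended argument is precisely your star with $q_{v_0}\equiv 0$, $r_{v_0}\equiv 1$ at the source and $q_{\ell_i}(x)=x_i$, $r_{\ell_i}(y)=y_i$ at the sinks, with condition (b) secured by the observation (already recorded in \Cref{sec:protocols} when the monotone KW game is defined) that for $x\in f^{-1}(0)$, $y\in f^{-1}(1)$ a coordinate with $x_i=0\wedge y_i=1$ always exists. Your monotonicity argument for that observation is also the standard one, so there is nothing to add.
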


Similarly, we cannot expect interesting results if the conjunctions are large.

\begin{fact}
\label{fact:ninequalities}
For every $n$-bit partial monotone Boolean function $f$, there is a~protocol of degree~2  with a~conjunction of $n-2$ inequalities solving $f$ whose size is $2n-1$.
\end{fact}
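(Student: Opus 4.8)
The plan is to mimic the construction behind \Cref{fact:degreen} but to replace the single source of out-degree $n$ by a binary ``caterpillar'' of out-degree $2$. Concretely, I would take $n-1$ internal vertices $v_1,\dots,v_{n-1}$ and $n$ sinks $\ell_1,\dots,\ell_n$, with $v_1$ the source, giving each $v_k$ ($k<n-1$) the two children $\ell_k$ and $v_{k+1}$, and giving $v_{n-1}$ the two children $\ell_{n-1}$ and $\ell_n$. This graph has out-degree exactly $2$ and exactly $(n-1)+n = 2n-1$ vertices, matching the claimed size. For the feasibility relation I would declare the sink $\ell_i$ feasible for $(x,y)$ iff $x_i=0 \wedge y_i=1$, i.e.\ iff $x_i<y_i$, and the internal vertex $v_k$ feasible for $(x,y)$ iff $x_i\geq y_i$ holds for every $i<k$. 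Thus the source $v_1$ (empty prefix condition) is feasible for \emph{all} pairs, and moving deeper along the caterpillar strengthens the condition one coordinate at a time.

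The second step is to check that these feasibility sets are conjunctions of at most $n-2$ inequalities, which is where the real content lies. The sink condition $x_i<y_i$ is a single inequality $q(x)<r(y)$ with $q(x)=x_i$, $r(y)=y_i$. The crucial observation is that the \emph{negated} atomic condition $x_i\geq y_i$ is, on the Boolean cube, also expressible as a single inequality: taking $q(x)=-x_i$ and $r(y)=-y_i+\tfrac12$ gives $q(x)<r(y) \iff y_i-x_i<\tfrac12 \iff x_i\geq y_i$. Hence the feasibility set of $v_k$ is the conjunction $\bigwedge_{i<k}(-x_i < -y_i+\tfrac12)$ of $k-1 \leq n-2$ inequalities. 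To fit the uniform format of \Cref{def:pconjunction} I would pad every vertex up to exactly $c=n-2$ conjuncts using vacuously true inequalities $0<1$ (that is, $q^j_v\equiv 0$, $r^j_v\equiv 1$); this leaves all feasibility sets unchanged. Since the deepest prefix condition, at $v_{n-1}$, uses $n-2$ genuine inequalities, $c=n-2$ is exactly enough.

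Finally I would verify conditions (a) and (b) of \Cref{def:pgeneral}. Condition (a) is immediate: $v_1$ is feasible for every pair, in particular for all $x\in f^{-1}(0)$, $y\in f^{-1}(1)$. For the progress condition (b) at an internal $v_k$ with $k<n-1$: if $v_k$ is feasible then $x_i\geq y_i$ for all $i<k$, and now either $x_k<y_k$, making $\ell_k$ feasible, or $x_k\geq y_k$, making $v_{k+1}$ feasible; this holds for all pairs, not only valid ones. At the last vertex $v_{n-1}$ I use monotonicity exactly as in \Cref{fact:degreen}: if $v_{n-1}$ is feasible and $x\in f^{-1}(0)$, $y\in f^{-1}(1)$, then $x\not\geq y$ forces some coordinate with $x_i<y_i$, and since $x_i\geq y_i$ already holds for all $i\leq n-2$, that coordinate is $n-1$ or $n$, so $\ell_{n-1}$ or $\ell_n$ is feasible. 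The sinks trivially meet requirement (ii). I expect the only genuinely delicate point to be the choice of feasibility sets: the obvious ``$v$ is feasible iff the separating coordinate lies in the subtree below $v$'' makes the internal sets \emph{disjunctions} of triangles, which a conjunction of inequalities cannot realize; the caterpillar together with the complementary prefix condition $\bigwedge_{i<k}(x_i\geq y_i)$ is precisely what turns these sets into conjunctions of inequalities. (For $n\leq 2$ the statement is degenerate, as $n-2\leq 0$; one reads the claim for $n\geq 3$.)
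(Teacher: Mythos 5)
Your construction is correct: the caterpillar with internal vertices $v_1,\dots,v_{n-1}$ carrying the prefix conditions $\bigwedge_{i<k}(x_i\geq y_i)$ (each $x_i\geq y_i$ expressible as a single real inequality, e.g.\ $-x_i<-y_i+\tfrac12$) and sinks $\ell_1,\dots,\ell_n$ gives exactly $2n-1$ vertices, out-degree $2$, and at most $n-2$ conjuncts after padding, with the progress condition holding for all pairs at $v_k$, $k<n-1$, and monotonicity forcing the separating coordinate into $\{n-1,n\}$ at $v_{n-1}$. The paper states this fact without proof (as ``simple''), and your argument is evidently the intended one --- the size bound $2n-1$ matches your count of $n-1$ internal vertices plus $n$ sinks exactly --- including the correct observation that the internal feasibility sets must be the complementary prefix conditions rather than disjunctions of triangles.
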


It is easy to see that protocols with a conjunction of two (or more than two) equalities are at least as strong as protocols with equality.

\begin{fact}
\label{fact:equalitybytwoinequalities}
Let $P$ be a~protocol of degree~$d$ with equality solving an $n$-bit partial
monotone Boolean function $f$. If the size of $P$ is $s$, then there exists
a~protocol of degree $d$ with a conjunction of two inequalities solving $f$ whose size is $s$.
\end{fact}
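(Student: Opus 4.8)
The plan is to keep the underlying graph $G$ of $P$ completely unchanged and to re-express, vertex by vertex, the very same feasibility relation $F$ in the form required by Definition~\ref{def:pconjunction}. Since the graph, its degree bound $d$, and its number of vertices are untouched, the resulting protocol automatically has degree $d$ and size $s$; moreover, because $F$ itself does not change, conditions (a), (b) of Definition~\ref{def:pgeneral} (solving $f$) and the sink condition (ii) are inherited for free. Thus the whole task reduces to the purely local algebraic observation that, for each $v$, the predicate $q_v(x) = r_v(y)$ can be written as a conjunction of two strict inequalities of the shape $q^j_v(x) < r^j_v(y)$.

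The obvious starting point is $a = b \iff a \leq b \wedge b \leq a$, so I would like to realize each of the two non-strict inequalities as one strict inequality. The only obstacle is the mismatch between $\leq$ and the strict $<$ demanded by the definition, and this is exactly where I exploit that the domain $\{0,1\}^n$ is finite. For a fixed vertex $v$ the value sets $\{q_v(x)\}$ and $\{r_v(y)\}$ are finite, so I can set
\[
m_v = \min\{\, |q_v(x) - r_v(y)| : x,y \in \{0,1\}^n,\ q_v(x) \neq r_v(y) \,\}
\]
and choose any $\delta_v$ with $0 < \delta_v < m_v$ (and, say, $\delta_v = 1$ when this set is empty, in which case every pair is feasible at $v$ and any positive value works). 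Because every nonzero difference $q_v(x) - r_v(y)$ has absolute value at least $m_v > \delta_v$, the strict inequality $q_v(x) - r_v(y) < \delta_v$ holds exactly when $q_v(x) \leq r_v(y)$.

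With this $\delta_v$ in hand I would define the two pairs of functions by $q^1_v = q_v$, $r^1_v = r_v + \delta_v$, and $q^2_v = -q_v$, $r^2_v = -r_v + \delta_v$. Unwinding the definitions, $q^1_v(x) < r^1_v(y)$ is equivalent to $q_v(x) \leq r_v(y)$, while $q^2_v(x) < r^2_v(y)$ is equivalent to $r_v(y) \leq q_v(x)$; their conjunction is therefore equivalent to $q_v(x) = r_v(y)$, which is precisely $(x,y,v) \in F$. Performing this substitution at every vertex yields the desired protocol of degree $d$ with a conjunction of two inequalities, of the same size $s$. I expect no genuinely hard step here: the only point to get right is the discreteness argument that licenses replacing $\leq$ by a strict inequality, which is immediate from the finiteness of the Boolean cube.
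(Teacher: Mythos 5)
Your proof is correct and is essentially the argument the paper intends: the paper states this fact without proof (``it is easy to see\dots''), and the implicit route is exactly yours, namely $q_v(x)=r_v(y) \iff q_v(x)\le r_v(y) \wedge r_v(y)\le q_v(x)$ on the unchanged graph, with discreteness converting each $\le$ into a strict $<$. Your minimal-gap $\delta_v$ handles the strictness cleanly and correctly (the paper could equally have invoked its later normalization of the values to integers in $\{0,\dots,2^n-1\}$ and shifted by $1$), so nothing is missing.
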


\subsection{Proofs}

We postpone the proof of \Cref{thm:main} to \Cref{sec:proofmainthm}.

\begin{proof}[Proof of \Cref{thm:c1}]
We use \Cref{l:hp} to convert $P$ into a protocol $P'$ of degree 2 with equality. The size is $\mathcal{O}(sn^{d-2})$. We then use \Cref{thm:main} with $c = 1$ and $d = 2$ to convert $P'$ into a protocol of size $\mathcal{O}(sn^{d-2} \cdot n^4) = \mathcal{O}(sn^{d+2})$. 
\end{proof}

\begin{proof}[Proof of \Cref{cor:degreereduction}]
We use \Cref{fact:equalitybytwoinequalities} to convert $P$ into a~protocol $P'$ of degree~$d$ with a~conjunction of two inequalities whose size is $s$. Using \Cref{thm:main} with $c = 2$ for $P'$, we obtain a~protocol of degree~2 with equality solving $f$ whose size is $\mathcal{O}(sn^{4d + 1})$.
\end{proof}

\section{Proof of the main theorem}
\label{sec:proofmainthm}

We prove \Cref{thm:main} in this section.

\subsection{Structure of the proof}

Let $G = (V, E)$ be the underlying graph of $P$ and let $q^j_v$, $r^j_v$ for $j \in [c]$ be the functions associated with $v \in V$. Observe that for fixed inner  $v \in V$ and $j \in [c]$, it holds
$$\left|\{ q^j_v(x) \mid x \in f^{-1}(0) \} \cup \{ r^j_v(y) \mid y \in f^{-1}(1) \}\right| \leq \left| f^{-1}(0) \cup f^{-1}(1) \right| \leq 2^n.$$
As the only thing that matters is the relative order of the values, we can w.l.o.g.\ assume that $q^j_v(x), r^j_v(y) \in \{0, \dots, 2^n - 1\}$. (In fact, the functions $q^j_v(x)$ and $r^j_v(y)$ are defined for any $x, y \in \{0,1\}^n$. However, we ignore the values for $x \notin f^{-1}(0)$ and $y \notin f^{-1}(1)$.) We treat $q^j_v(x)$ and $r^j_v(y)$ as $n$-bit numbers. The $i$-th most significant
bit of $q^j_v(x)$ (resp. $r^j_v(y)$) is denoted by $q^j_v(x)[i]$ (resp. $r^j_v(y)[i]$). That is
$$q^j_v(x) = \sum_{i=1}^{n} 2^{n-i}q^j_v(x)[i] \quad\text{and}\quad r^j_v(y) = \sum_{i=1}^{n} 2^{n-i}r^j_v(y)[i].$$

The proof based on the fact that one can express inequality of $n$-bit numbers as one of $n$ particular equalities. Denote $a = q^j_v(x)$ and $b = r^j_v(y)$. Then
$$
a < b \Leftrightarrow \exists i \in [n] \left(a[1] = b[1] \wedge \cdots \wedge a[i-1] = b[i-1] \wedge a[i] = 0 \wedge 1 = b[i] \right).
$$
The conjunction can be written as a single equality as
\begin{equation}\label{ineq}
a < b \Leftrightarrow \exists i \in [n] \left(a[1]\dots a[i-1]a[i]1 = b[1]\dots b[i-1] 0 b[i]\right).
\end{equation}
An important feature of the above expression is that the left-hand side of each equality is a function of only $x$ (it does not depend on $y$) and the right-hand side of each equality is a function of only $y$.

We can similarly express the conjunction of $c$ inequalities as one of $n^c$ equalities. Consider an inner vertex $v \in V$ and define:
\begin{align*}
\chi^{j,(i,=)}_v &\equiv q_v^j(x)[i] = r_v^j(y)[i]\\
\chi^{j,(i,\neq)}_v &\equiv q_v^j(x)[i] = 1-r_v^j(y)[i]\\
\chi^{j,(i,<)}_v &\equiv q_v^j(x)[i] = 0 \wedge 1=r_v^j(y)[i]\\
\chi^{j,(i,>)}_v &\equiv q_v^j(x)[i] = 1 \wedge 0=r_v^j(y)[i]
\end{align*}
 (we use the symbol $\equiv$ for definitions of formulas).
We can then write
$$\bigwedge_{j=1}^c \left(q^j_v(x) < r^j_v(y)\right) \Leftrightarrow \exists (i_1, \dots, i_c) \in [n]^c \bigwedge_{j=1}^c \left(\bigwedge_{k=1}^{i_j-1}\chi^{j,(k,=)}_v \wedge \chi^{j,(i_j,<)}_v\right).$$ The expression bounded by the existential quantifier is a conjunction of equalities of bits such that in each equality the left-hand side depends only on $x$ and the right-hand side depends only on $y$. Therefore, we can again rewrite the expression into a single equality. This is the desired expression of a conjunction of $c$ inequalities as one of $n^c$ equalities (the particular form of the equalities will be of importance, too).

We introduce the following convention: Vertices in the protocol with equality $P'$ will be labeled with conjunctions of equalities of bits such that the left-hand side of each equality is a function of $x$ and the right-hand side of each equality is a function of $y$. If a vertex $v$ in $P'$ is labeled with $\psi$ defined as $$b_1(x) = b'_1(y) \wedge \cdots \wedge b_\ell(x) = b'_\ell(y),$$ then the functions associated with $v$ in $P'$ are
$$q_v(x) := b_1(x)b_2(x)\ldots b_\ell(x) \quad\text{and}\quad r_v(y) := b'_1(y)b'_2(y)\ldots b'_\ell(y).$$ Hence, $v$ labeled with $\psi$ is feasible in the new protocol $P'$ for $x$ and $y$ iff $\psi$ is true (for $x$ and $y$).

For $I = (i_1, \dots, i_c) \in [n]^c$, we denote $$\Phi^I_v \equiv \bigwedge_{j=1}^c \left( \bigwedge_{k=1}^{i_j-1} \chi^{j,(k,=)}_v \wedge \chi^{j,(i_j,<)}_v \right).$$
We state what this notation essentially means as a fact.
\begin{fact}
A vertex $v$ is feasible (in the original protocol $P$) for $x$ and $y$ iff there exists $I \in [n]^c$ such that $\Phi^I_v$ is true. We call such an $I$ a \emph{witness} for $v$.
\end{fact}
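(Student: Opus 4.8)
The plan is to establish the equivalence separately for each of the $c$ inequalities and then merge the resulting existential quantifiers into one. By the definition of a protocol with a conjunction of $c$ inequalities, the vertex $v$ is feasible for $x$, $y$ precisely when $\bigwedge_{j=1}^c \left( q^j_v(x) < r^j_v(y) \right)$ holds, so it suffices to rewrite each conjunct $q^j_v(x) < r^j_v(y)$ in bit form and then distribute the quantifiers across the conjunction.

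First I would fix $j$ and apply the bit decomposition \eqref{ineq} to the pair $a = q^j_v(x)$, $b = r^j_v(y)$, both regarded as $n$-bit numbers. The content of this step is the elementary ``first differing bit'' argument: if $a < b$, then reading from the most significant bit there is a first position $i_j$ at which the two strings differ; since $a < b$ this position must carry $a[i_j] = 0$ and $b[i_j] = 1$ (the opposite pattern would force $a > b$), while all higher positions agree. In the notation introduced above this says exactly that $\bigwedge_{k=1}^{i_j - 1} \chi^{j,(k,=)}_v \wedge \chi^{j,(i_j,<)}_v$ holds. Conversely, any $i_j$ for which this conjunction holds forces $a$ and $b$ to agree on bits $1, \dots, i_j - 1$ and to satisfy $a[i_j] = 0 < 1 = b[i_j]$, whence $a < b$ irrespective of the lower-order bits. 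Thus $q^j_v(x) < r^j_v(y) \Leftrightarrow \exists i_j \in [n]\, \bigl( \bigwedge_{k=1}^{i_j-1} \chi^{j,(k,=)}_v \wedge \chi^{j,(i_j,<)}_v \bigr)$.

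The final step is purely logical. Taking the conjunction over $j \in [c]$ of these per-coordinate existentials and using that the witnesses $i_1, \dots, i_c$ range over independent variables, I would pull all the quantifiers to the front as a single existential over the tuple $I = (i_1, \dots, i_c) \in [n]^c$, using the equivalence $\bigwedge_{j=1}^c \exists i_j\, \phi_j(i_j) \Leftrightarrow \exists I \in [n]^c \bigwedge_{j=1}^c \phi_j(i_j)$. The inner conjunction is by definition $\Phi^I_v$, which is precisely the claimed equivalence.

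I do not expect a genuine obstacle here: the statement is essentially bookkeeping layered on top of \eqref{ineq}. The one point that must be stated explicitly is that after the normalization $q^j_v(x), r^j_v(y) \in \{0, \dots, 2^n - 1\}$ both values have exactly $n$ bits, so whenever they are unequal a differing bit position exists. This is what guarantees that a witness $i_j$ can always be found, and it is what makes the ``first differing bit'' argument go through for every feasible pair $x$, $y$.
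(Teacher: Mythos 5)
Your proof is correct and takes essentially the same route as the paper, which states this fact as a direct summary of its preceding derivation: expand each inequality $q^j_v(x) < r^j_v(y)$ via the first-differing-bit equivalence \eqref{ineq}, then merge the $c$ independent existential quantifiers into a single one over the tuple $I \in [n]^c$, yielding exactly $\Phi^I_v$. Your explicit observation that the normalization $q^j_v(x), r^j_v(y) \in \{0, \dots, 2^n - 1\}$ guarantees genuine $n$-bit representations (so a differing bit always exists when the values are unequal) is a point the paper leaves implicit, but otherwise the arguments coincide.
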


We first describe a sort of a skeleton of the protocol with equality $P'$ that simulates the given protocol $P$. It is a set of vertices $S$ between which we will later insert trees connecting them. For each sink $\ell \in V$, we put $\ell^{P'}$ into $S$. There, $\ell^{P'}$ is labeled with the conjunction $x_i = 0 \wedge 1 = y_i$ where $i$ is the solution of the KW game corresponding to $\ell$ in $P$. For each inner vertex $v \in V$, we put into $S$ $n^c$ vertices of the form $v(I)$ for $I \in [n]^c$ such that $v(I)$ is labeled with the conjunction $\Phi^I_v$. Observe that $v$ is feasible for $x,y$ in $P$ iff there is a witness $I \in [n]^c$ such that $v(I)$ is feasible in $P'$. An illustration for a vertex $v$ with two children is in \Cref{fig:mainthm_step1}; the question mark corresponds to the part of the protocol which we have not described yet.
The size of the skeleton is upper-bounded by $sn^c$, where $s=|P|$ and $n^c$ is the number of witnesses $I$.

\begin{figure}
    \centering
    \includegraphics[scale=0.8]{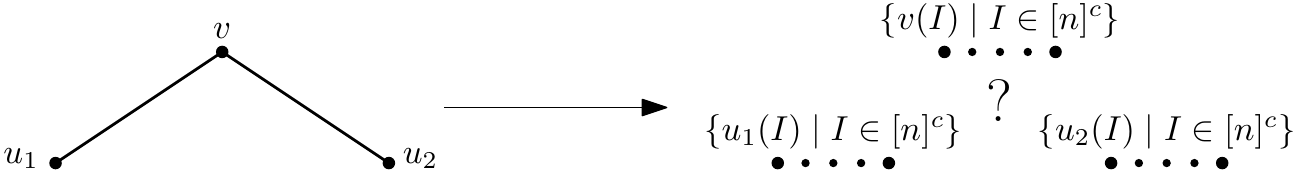}
    \caption{First step of the construction of the protocol $P'$}
    \label{fig:mainthm_step1}
\end{figure}

To construct $P'$ it now remains to construct the trees between the elements of the skeleton and prove their properties. This is done in the following lemma.

\begin{lemma}
\label{l:technical}
Let $\phi$ be a conjunction of bit equalities such that the left-hand side of each equality is a function of $x$ and the right-hand side of each equality is a function of $y$. Let $u_1$, $\dots$, $u_p$ be inner vertices of $P$ and let $\ell_1$, $\dots$, $\ell_{p'}$ be sink vertices of $P$. If the validity of $\phi$ for $x$ and $y$ implies that at least one of the vertices $u_1$, $\dots$, $u_p$ and $\ell_1$, $\dots$, $\ell_{p'}$ is feasible for $x$ and $y$ in $P$, then it is possible to construct a binary protocol $T$ in a form of a tree with the following properties:
\begin{enumerate}
    \item The source of the tree is labeled with $\phi$.
    \item The set of sinks of the tree is $$\left\{ u_{i}(I') \mid i \in [p], I' \in [n]^c \right\} \cup \{ \ell^{P'}_1, \dots, \ell^{P'}_{p'} \}.$$
    \item The size of the tree is $\mathcal{O}(n^{2c(p+p') - 1})$.
    \item The tree satisfies the feasibility condition of Definition~1(b), i.e. if a vertex with two children is feasible for $x$ and $y$, at least one of the children is feasible for $x$ and $y$.
\end{enumerate}
\end{lemma}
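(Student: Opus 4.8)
The plan is to build the tree $T$ by induction on $p+p'$, processing the vertices $u_1,\dots,u_p,\ell_1,\dots,\ell_{p'}$ one at a time and, for each, deciding by a \emph{bit-comparison gadget} whether it is feasible in $P$. The decisive observation is that the only two splits we ever need have conjunctive (equality) labels on \emph{both} branches and still cover their parent, so they satisfy \Cref{def:pgeneral}(b) automatically, irrespective of the conjunction accumulated along the path: at any node we may split on the tautology $\chi^{j,(i,=)}_{u}\vee\chi^{j,(i,\neq)}_{u}$, and from a node already asserting $\chi^{j,(i,\neq)}_{u}$ we may split on $\chi^{j,(i,<)}_{u}\vee\chi^{j,(i,>)}_{u}$. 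This is exactly the bit-by-bit comparison of the $n$-bit numbers $q^j_u(x)$ and $r^j_u(y)$ underlying the expansion~\eqref{ineq}.

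For a single inequality $q^j_u(x)<r^j_u(y)$ I would form a caterpillar: a path of ``bit $i$ equal'' nodes asserting the growing prefix $\bigwedge_{k<i}\chi^{j,(k,=)}_{u}$, and at each position a side-branch to $\chi^{j,(i,\neq)}_{u}$ which splits into a \emph{success} exit asserting $\chi^{j,(i,<)}_{u}$ (the prefix together with this bit witnesses $q^j_u<r^j_u$) and a \emph{failure} exit asserting $\chi^{j,(i,>)}_{u}$; the bottom of the caterpillar, where all bits were equal, is a further failure exit. To handle the conjunction of $c$ inequalities for one inner vertex $u$ I would chain these caterpillars: each success exit of inequality $j$ feeds the caterpillar of inequality $j+1$, and after inequality $c$ succeeds with witness positions $(i_1,\dots,i_c)=I'$ the accumulated label implies precisely $\Phi^{I'}_{u}$, so that path is connected by a single covering edge to the skeleton sink $u(I')$. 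Every failure exit, by contrast, asserts $q^j_u(x)\ge r^j_u(y)$ for some $j$, hence $\neg(u\text{ feasible})$, and is routed into the gadget for the next vertex in the list. A sink $\ell_k$ is treated by a two-node gadget splitting $x_{i_k}=0\vee x_{i_k}=1$ and then $y_{i_k}=1\vee y_{i_k}=0$, whose single success exit connects to $\ell^{P'}_k$.

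The hypothesis of the lemma enters exactly once, at the bottom of the recursion: a root-to-leaf path that has failed on every vertex carries a label conjoining $\phi$ with assertions forcing all $u_i$ and all $\ell_k$ to be infeasible, and the hypothesis says this conjunction is unsatisfiable. Whenever an elementary split produces an unsatisfiable child, the parent already implies the surviving child, so the dead branch may be pruned (contracting the parent to its surviving child) without disturbing \Cref{def:pgeneral}(b). After pruning all such dead ends the only remaining leaves are success exits, i.e.\ vertices among $\{u_i(I')\}\cup\{\ell^{P'}_k\}$; any prescribed sink that is never reached is simply unreachable in $P'$ and may be discarded, so the sink set and the source label $\phi$ are as required.

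For the size, each caterpillar has $\mathcal{O}(n)$ internal nodes and $\mathcal{O}(n)$ exits, so the gadget for an inner vertex has $\mathcal{O}(n^{c})$ nodes and $\mathcal{O}(n^{c})$ failure exits while a sink gadget has $\mathcal{O}(1)$ of each; since each failure exit spawns a fresh copy of the tree for the remaining vertices, a crude multiplicative bound over the $p+p'$ gadgets (terminating as soon as a feasible vertex is found and pruning the final dead ends) keeps the total within $\mathcal{O}(n^{2c(p+p')-1})$. I expect the main obstacle to be precisely the interplay between covering and labels: because equality protocols forbid negation, the ``no'' answer of a comparison cannot be stored as one label but must be realized as an explicit conjunction of bit outcomes, which both forces the caterpillar design and is what makes the gadgets — and hence the whole tree — branch. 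Checking that the hypothesis-driven pruning leaves \Cref{def:pgeneral}(b) intact at every contracted node, and bookkeeping the multiplicative blow-up carefully enough to land at the stated exponent, are the two points demanding the most care.
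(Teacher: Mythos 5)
Your construction is correct and is essentially the paper's own proof: the same four-level search (vertices, then the $c$ inequalities, then bits from most significant down, with binary splits first on $\chi^{j,(k,=)}$ versus $\chi^{j,(k,\neq)}$ and then on $\chi^{j,(k,<)}$ versus $\chi^{j,(k,>)}$), the same accumulated conjunctions of elementary $\chi$-terms as feasibility labels (which is exactly what forces the tree shape), and the same routing of success paths to the skeleton sinks $u_i(I')$ and of failure exits to the next vertex's gadget. The only real deviations are minor: the paper uses the lemma's hypothesis \emph{during} the construction — for the last-tested vertex it drops the $>$ branches and skips the least-significant-bit test, so dead ends never arise — whereas you build the fully symmetric tree and invoke the hypothesis once to prune the unsatisfiable dead-end leaves (a sound variant, and the paper itself notes its optimization ``is not essential''), and your crude multiplicative recurrence actually yields growth of order $n^{c(p+p')}$ per the chaining, which sits comfortably inside the stated, deliberately loose bound $\mathcal{O}(n^{2c(p+p')-1})$ that the paper's explicit $T_{i,j,k}$ recurrences also only bound from above.
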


Before we prove the lemma, we show that it implies the theorem. First, we apply the lemma with $\phi$ being an empty conjunction and $p' = 0$, $p = 1$, $u_1 = v_0$ (the source of $P$). Second, we apply the lemma for every inner $v \in V$ and $I \in [n]^c$ by setting $\phi \equiv \Phi^I_v$. In this case $u_1, \dots, u_p$ and $\ell_1, \dots, \ell_{p'}$ are the children of $v$. We join all the trees we obtain and identify sources and sinks of these trees when they have the same labels.
For every inner $v \in V$ and $I \in [n]^c$, the vertex $v(I)$ is a source of exactly one tree and a sink of at least one tree. Therefore, the protocol has one source labeled with empty conjunction, which is feasible for any $x$ and $y$, and sinks of the form $\ell^{P'}$ which correspond to solutions for the monotone KW game. The out-degree of every vertex is at most 2 and the feasibility condition of Definition~1(b) is satisfied because the trees satisfy it.

What remains is to estimate the size of the protocol. We apply \Cref{l:technical} once with empty $\phi$ and $p + p' = 1$ and we apply it for every inner $v\in V$ and $I \in [n]^c$. That is at most $s n^c$-times. In these cases it holds $p + p' \leq d$.
The total size is then
$$s n^c \cdot \mathcal{O}(n^{2cd - 1}) + \mathcal{O}(n^{2c - 1}) = \mathcal{O}(sn^{2cd + c - 1}).$$


\subsection{Proof of \Cref{l:technical}}

{\bf The protocol as a search procedure.}
We will start by describing the structure of the protocol without the feasibility conditions. We will first assume that all $u_i$s are inner nodes, after that we will say how to modify the protocol if some nodes are sinks.
We will view the protocol as a search procedure. Formally, the search is performed by two communication parties with the help of a referee who tells them which of the available vertices of the protocol is feasible. It is clear, however, that one party (corresponding to the referee) is enough and what it does is constructing a path from the root to a leaf such that all vertices are feasible. We will take the position of the party and imagine that we are performing the search.

The search is structured in 4 levels.

\begin{enumerate}
\item On the highest level we are searching for an index $i\in[d]$ such that $u_i$ is feasible. We search $i$ by systematically testing $i=p,p-1,\dots,1$ until we find $i$ such that $u_i$ is feasible.

\item On the level below the highest we test, for a given $i$, whether the conjunction of inequalities $\bigwedge_{j=1}^c q_{u_i}^j(x)<r_{u_i}^j(y)$ is satisfied. For $j=c,c-1,\dots,1$, we systematically test every inequality $q_{u_i}^j(x)<r_{u_i}^j(y)$ one by one until we either find an inequality that is not satisfied, or verify that all inequalities are satisfied. If we find an inequality that is not satisfied, we proceed to the next conjunction. If all conjunctions are satisfied, the search is completed and we are at a leaf of $T$.

\item On the next level below we test inequalities $q_{u_i}^j(x)<r_{u_i}^j(y)$.
Let $a=q_{u_i}^j(x), b=r_{u_i}^j(y)$. To test $a<b$?, we compare the bits of $a$ and $b$ starting from the most significant one and proceeding to less significant ones. The protocol is based on formula \cref{ineq}, which means that we continue as long as the bits are equal. So we have three cases:
 \begin{enumerate}
 \item If for the currently tested bit $k$, $a[k]=b[k]$, the search goes on if $k>1$. If $k=1$ then $a=b$, so the players know that $u_i$ is not feasible and we go to test next node $u_{i-1}$.
 \item If we find $k$ such that $a[k]<b[k]$, we know that $q_{u_i}^j(x)<r_{u_i}^j(y)$  and we start testing the next inequality if $j>1$, or finish testing if $j=1$, because $u_i$ is feasible.
 \item If we find $k$ such that $a[k]>b[k]$, we know that $q_{u_i}^j(x)>r_{u_i}^j(y)$. In this case we know that $u_i$ is not feasible and we go to test next conjunction. 
 \end{enumerate}

\item On the lowest level we need to determine which of the three possibilities $a[k]<b[k],a[k]=b[k],a[k]>b[k]$ holds true. Since the simulation is by a protocol of degree 2, we first decide whether $a[k]=b[k]$ or $a[k]\neq b[k]$ and in the second case we decide whether $a[k]<b[k]$ or $a[k]>b[k]$. If we are  testing the first conjunction, the feasibility conditions ensure that that none of $u_i$ $i>1$ is feasible, so $u_1$ must be feasible. Which means that all inequalities in the conjunction are true and thus
we only need to test whether $a[k]=b[k]$ or $a[k]<b[k]$. (This is not essential, it only makes the tree a little smaller than it would be if we kept the superfluous test for $a[k]>b[k]$.) 
\end{enumerate}

This search procedure can be naturally represented as a directed acyclic graph or a tree but we also need to label the vertices with feasibility conditions. Therefore have to use a tree, because the feasibility conditions depend on the history of the search. To estimate the size of the tree we need a more explicit description, which we will postpone to the next section.

\medskip
{\bf Feasibility conditions.}
Now we describe the feasibility conditions used in the search procedure. The feasibility conditions are given by two strings, one depending on $x$, the other depending on $y$. At the root of the tree we have two strings that make the condition of $v(I)$. In each step we extend the strings by one or two more bits. Since the protocol has the form of a tree, we do not have to forget any bits. (We could forget some bits at some stages of the protocol and thus make the protocol DAG-like and slightly smaller, but the gain would not be significant and it would complicate the proof.) The added bits are determined by what the protocol tests.

It will be more convenient to view the feasibility conditions as conjunctions of $\Phi$ and some elementary equality conditions  $\chi^{j,(k,=)}_v,\chi^{j,(k,\neq)}_v,\chi^{j,(k,<)}_v,\chi^{j,(k,>)}_v$. At each step of the procedure we add a new elementary term.

\begin{enumerate}
\item When we test $a[k]=b[k]$? and $w$ (respectively $z$) is the child where $a[k]=b[k]$ (respectively $a[k]\neq b[k]$), the feasibility condition is extended with $\chi^{j,(k,=)}_{u_i}$ (respectively with $\chi^{j,(k,\neq)}_{u_i}$).
\item When we test whether $a[k]<b[k]$ or $a[k]>b[k]$ and $w$ (respectively $z$) is the child where $a[k]<b[k]$ (respectively $a[k]> b[k]$), the feasibility condition is extended with $\chi^{j,(k,<)}_{u_i}$ (respectively with $\chi^{j,(k,>)}_{u_i}$).
\end{enumerate}
This, clearly, satisfy the condition of the feasibility predicates of Definition~1(b), because in the first case always $a[k]=b[k]$ or $a[k]\neq b[k]$ holds true, and in the second case $a[k]<b[k]$ or $a[k]>b[k]$ holds true, because in the parent node we have  $a[k]=1-b[k]$.

At the leaves $u_i(I)$ of the protocol, all bits terms are removed except for those that witness the feasibility of $u_i$ in $P$. Thus we get $\Phi^I_{u_i}$.

This determines the feasibility conditions in the protocol. For the sake of higher clarity, we will describe explicitly the conditions in particular steps, but we will only say what is added at these steps to the previous stings. We will consider the levels of the protocol in the inverse order.

\begin{itemize}
    
\item We have already said what happens on the lowest level.

\item When $a<b$? is tested and after the $k$th bit has been tested without deciding the inequality, the last $k$ terms in the feasibility condition express that $a[n]a[n-1]\dots a[n-k+1]=b[n]b[n-1]\dots b[n-k+1]$. This guarantees that if $a\neq b$, the two numbers differ at a lower bit. After testing the next bit (if there is any) the condition will be extended as defined above.

\item When testing a conjunction $a^1<b^1\wedge a^2<b^2\wedge\dots \wedge a^c<b^c$ and the last $j$ inequalities were tested positive, the strings of the feasibility conditions will contain witnesses of these inequalities. When an inequality $a^j<b^j$ is tested negatively, the witness for $a^j\geq b^j$ is added; there are two versions, one for $a^j=b^j$ and one for $a^j> b^j$. (At this point we could remove the witnesses of $a^c<b^c,\dots,a^{c-j+1}<b^{c-j+1}$, but we do not do it.)

\item When testing the feasibility of $u_i$ for $i>1$,
the feasibility condition contains formulas that witness that all $u_{i+1},\dots,u_p$ are not feasible.
If $u_i$ is tested positively, the communication proceeds to the leaf $u_i(I)$ where $I$ consists of the witnesses for the members of the conjunction. Otherwise the a witness of the failure is added and the communication proceeds with testing $u_{i-1}$.
\end{itemize}

\noindent The property of the feasibility conditions in Definition~1(b) is satisfied because:

\begin{enumerate}
\item When testing conjunctions for $u_i$ for $i<d$, the protocol considers all 3 possibilities $a[k]<b[k],a[k]=b[k],a[k]>b[k]$ one of which must be true.
\item In the first conjunction we only test whether $a[k]<b[k]$ or $a[k]=b[k]$, because the feasibility conditions contain terms witnessing the feasibility of $v$ and non-feasibility $u_2,\dots,u_{d}$ in $P$, which imply the feasibility of $u_1$. The last bit is not tested, because if the communication reaches this node the feasibility of $v$, non-feasibility $u_2,\dots,u_{d}$ and $a[n]\dots a[2]=b[n]\dots b[2]$ force $a[1]<b[1]$. 
\end{enumerate}

Note that in spite of the fact that the first conjunction is always tested positively if the communication reaches this stage, we have to ``test'' it. In this case it is not literally a test, but a search for a witness $I$ that guarantees that the conjunction is true.

\subsubsection{Estimating the size of T}

We can assume $p'=0$, i.e., there are no sinks among $u_i$s, because if there are some, the tree is smaller.

We construct the binary tree $T$ recursively by defining its subtrees $T^\psi_{i,j,k}$. The bottom indices determine the stage of the search: $i$ corresponds to vertex $u_i$, $j$ to the element of the tested conjunction, $k$ to the tested bit. The superscript $\psi$ denotes the feasibility condition in the root of $T^\psi_{i,j,k}$. If we disregard the feasibility conditions, then for fixed $i,j,k$, all trees $T^\psi_{i,j,k}$ are isomorphic. The feasibility conditions $\psi$, however, are different, because the stage $(i,j,k)$ of the communication may be reached in different ways and the history is encoded in $\psi$. We have already defined the feasibility conditions above. Here we 
will define the feasibility conditions $\psi$ for the trees $T^\psi_{i,j,k}$ only for the sake of completeness; they are not needed for estimating the size of $T$.

Trees $T^\psi_{i,j,k}$ are defined recursively using the lexicographic order starting with $i=1,j=1,k=2$. (Recall that we do not test the pair of least bits in the first inequality of the first conjunction, because their values are forced to 0 and 1; therefore we start with $k=2$, but in other instances of $i$ and $j$ we do have $k=1$.) The whole tree is $T = T^\phi_{p,c,n}$.

The tree $T^\psi_{i,j,k}$ has the root labeled with the formula $\psi$ that is a conjunction of $\phi$ (the feasibility condition of $v$ in $P$) and the elementary terms $\chi^{j,(k,=)}_v,\chi^{j,(k,\neq)}_v,\chi^{j,(k,<)}_v,\chi^{j,(k,>)}_v$ added on the path to its root.

Case 1: $i = 1$. For $j \in [c]$ and $k \in \{2, \dots, n\}$. The tree $T^\psi_{1,j,k}$ has a root labeled with $\psi$. There are two subtrees under the root, named $A^=$ and $A^<$. We define:
\begin{align*}
\psi^= &\equiv \psi \wedge \chi^{j,(n-k+1,=)}_{u_1}
\\
\psi^< &\equiv \psi \wedge \chi^{j,(n-k+1,<)}_{u_1}
\end{align*}
The particular choices for $A^=$ and $A^<$ are in the table. The superscript for $T$ is omitted; it is $\psi^=$ in the column $A^=$ and $\psi^<$ in the column $A^<$. When there is $u_1(I_1)$ or $u_2(I_2)$ in the table, the subtree consists of a single vertex $u_1(I_1)$ or $u_2(I_2)$. 

\begin{tabular}{lcccc}
Case &  $j$ & $k$ & $A^=$ & $A^<$
\\
\hline
1.1 & $=1$ & $=2$ & $u_1(I_1)$ & $u_1(I_2)$ \\
1.2 & $=1$ & $>2$ & $T_{1,1,k-1}$ & $u_1(I_2)$ \\
1.3 & $>1$ & $=2$ & $T_{1,j-1,n}$ & $T_{1,j-1,n}$ \\
1.4 & $>1$ & $>2$ & $T_{1,j,k-1}$ & $T_{1,j-1,n}$
\end{tabular}

\noindent The meaning of this table is as follows. In Case~1.1. this is the last step of the search procedure, hence the tree is connected to $u_1(I_1)$ and $u_1(I_2)$. The conditions $I_1$ and $I_2$ are witnesses of the first conjunction being true; they differ only in the last bits. In Case~1.2 the first conjunction is being tested and the bits of the numbers that are tested are $>2$. Hence the testing can either continue, if the bits are same (column $A^=$), or finish if the first is less then the second (column $A^<$). In Case 1.3 the second least significant bit is tested in the $j$th conjunction. Hence testing proceeds to the next conjunction (the $j-1$st) and the subtrees representing this search differ only in their feasibility conditions. In Case~1.4 a higher bit is tested in the $j$th conjunction. If the tested bits are equal, testing of the inequality continues, otherwise next inequality is tested.

\begin{figure}
    \centering
    \includegraphics[scale=1]{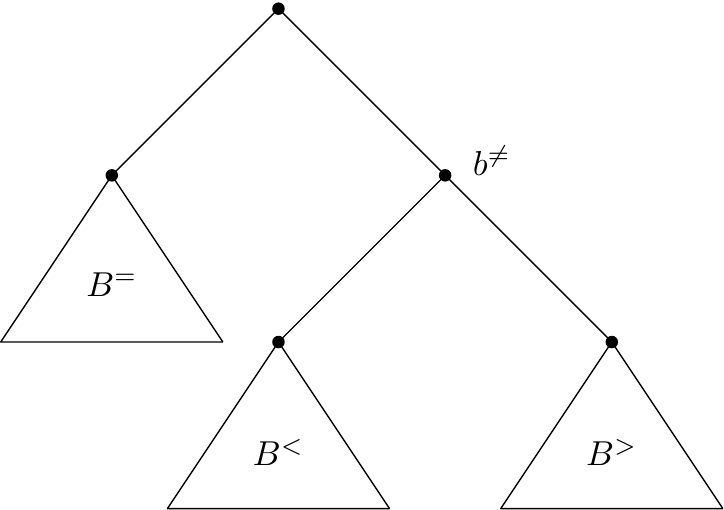}
    \caption{Case 2 of the construction}
    \label{fig:tecko}
\end{figure}

Case 2: $i > 1$. The difference is that now we have to test all three possibilities  $a[k]<b[k],a[k]=b[k],a[k]>b[k]$ for the tested $k$th bit.

For $i \in \{2, \dots, p\}$, $j \in [c]$ and $k \in [n]$, the tree $T^\psi_{i,j,k}$ has again a root labeled with $\psi$. The left subtree is $B^=$, the right subtree has a root $b^{\neq}$ which has two subtrees $B^<$ and $B^>$. See  \Cref{fig:tecko}.
We define:
\begin{align*}
\psi^{\neq} &\equiv \psi \wedge \chi^{j,(n-k+1,\neq)}_{u_i}
\\
\psi^= &\equiv \psi \wedge \chi^{j,(n-k+1,=)}_{u_i}
\\
\psi^< &\equiv \psi \wedge \chi^{j,(n-k+1,<)}_{u_i}
\\
\psi^> &\equiv \psi \wedge \chi^{j,(n-k+1,>)}_{u_i}
\end{align*}
The label of $b^{\neq}$ is $\psi^{\neq}$.
Again, we omit the superscript for $T$ which is $\psi^t$ in the column $B^t$ for $t \in \{ <, >, = \}$.

\begin{tabular}{lccccc}
Case & $j$ & $k$ & $B^=$ & $B^<$ & $B^>$
\\
\hline
2.1 & $=1$ & $=1$ & $T_{i-1,c,n}$ & $u_i(I_2)$ & $T_{i-1,c,n}$\\
2.2 & $=1$ & $>1$ & $T_{i,1,k-1}$ & $u_i(I_2)$ & $T_{i-1,c,n}$\\
2.3 & $>1$ & $=1$ & $T_{i-1,c,n}$ & $T_{i,j-1,n}$ & $T_{i-1,c,n}$\\
2.4 & $>1$ & $>1$ & $T_{i,j,k-1}$ & $T_{i,j-1,n}$ & $T_{i-1,c,n}$\\
\end{tabular}

We leave to the reader to figure out the meaning of these cases, because it is very similar to the case of~$i=1$. We now compute the size of $|T_{1,j,k}|$.

\begin{claim}
\label{claim:Tsizecase1}
For $j \in [c]$ and $k \in \{2, \dots, n\}$
$$\lvert T_{1,j,k} \rvert = 2kn^{j-1} - 1.$$
\end{claim}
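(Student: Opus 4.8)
The plan is to prove the claim by induction on the pair $(j,k)$ taken in the same lexicographic order used to define the trees, reading the recurrence directly off the Case~1 table. The basic observation I would start from is that each $T^\psi_{1,j,k}$ consists of a single root together with its two subtrees $A^=$ and $A^<$, so its size satisfies $\lvert T_{1,j,k}\rvert = 1 + \lvert A^=\rvert + \lvert A^<\rvert$, where a subtree that is a single leaf of the form $u_1(I)$ contributes exactly $1$. Since the feasibility conditions do not affect the size, I may ignore the superscripts $\psi$ and use that for fixed $(j,k)$ all the $T^\psi_{1,j,k}$ are isomorphic.

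For the base case I would take Case~1.1 ($j=1$, $k=2$), where both subtrees are single leaves $u_1(I_1)$ and $u_1(I_2)$, giving $\lvert T_{1,1,2}\rvert = 1+1+1 = 3 = 2\cdot 2\cdot n^0 - 1$, as required. The inductive step splits into the remaining three rows of the table. In Case~1.2 ($j=1$, $k>2$) one child is the recursively smaller tree $T_{1,1,k-1}$ and the other is a single leaf, so $\lvert T_{1,1,k}\rvert = 1 + \lvert T_{1,1,k-1}\rvert + 1$; the induction hypothesis $\lvert T_{1,1,k-1}\rvert = 2(k-1)-1$ then yields $2k-1 = 2kn^0-1$. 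In Case~1.3 ($j>1$, $k=2$) both children equal $T_{1,j-1,n}$, so $\lvert T_{1,j,2}\rvert = 1 + 2\lvert T_{1,j-1,n}\rvert$; plugging in $\lvert T_{1,j-1,n}\rvert = 2n\cdot n^{j-2}-1 = 2n^{j-1}-1$ gives $4n^{j-1}-1 = 2\cdot 2\cdot n^{j-1}-1$. Finally, in Case~1.4 ($j>1$, $k>2$) the children are $T_{1,j,k-1}$ and $T_{1,j-1,n}$, so $\lvert T_{1,j,k}\rvert = 1 + \lvert T_{1,j,k-1}\rvert + \lvert T_{1,j-1,n}\rvert = 1 + (2(k-1)n^{j-1}-1) + (2n^{j-1}-1) = 2kn^{j-1}-1$, completing the induction.

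There is no substantive obstacle here; the work is purely the arithmetic verification of the four cases. The only points requiring care are confirming that the induction is well founded --- each recursive call $T_{1,1,k-1}$, $T_{1,j,k-1}$, $T_{1,j-1,n}$ strictly precedes $(j,k)$ in the lexicographic order and stays within the admissible ranges $j\ge 1$, $2\le k\le n$ (using $n\ge 2$, which is forced since $k\in\{2,\dots,n\}$ is nonempty) --- and making sure that single-leaf children are counted with size $1$ rather than being mistakenly recursed upon.
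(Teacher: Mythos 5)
Your proof is correct and takes essentially the same approach as the paper's: induction on $(j,k)$ in lexicographic order, verifying the four rows of the Case~1 table with identical recurrences and arithmetic. The only difference is that you make explicit the well-foundedness of the recursion and the size-$1$ counting of the leaf children $u_1(I)$, which the paper leaves implicit.
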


\begin{proof}
We use induction and verify cases 1.1 -- 1.4:\\
Case 1.1: $$\lvert T_{1,1,2} \rvert = 3$$
Case 1.2: For $k \in \{3, \dots, n\}$:
$$\lvert T_{1,1,k} \rvert = 2 + \lvert T_{1,1,k-1} \rvert = 2 + 2(k-1)
-1 = 2k - 1$$
Case 1.3: For $j \in \{2, \dots, c\}$:
$$\lvert T_{1,j,2} \rvert = 1 + 2\lvert T_{1,j-1,n} \rvert = 1 +
2\cdot (2n\cdot n^{j-2} - 1) = 4n^{j-1} - 1$$
Case 1.4: For $j \in \{2, \dots, c\}$ and $k \in \{3, \dots, n\}$:
$$\lvert T_{1,j,k} \rvert = 1 + \lvert T_{i,j,k-1} \rvert + \lvert
T_{i,j-1,n} \rvert = 1 + 2(k-1)n^{j-1} - 1 + 2n\cdot n^{j-2} - 1 = 2kn^{j-1} - 1$$
\end{proof}

In the more general cases, it would be too tedious to write down the exact formulas and we instead use approximate bounds.

\begin{claim}
\label{claim:Tsize}
For $n \geq 10$:
$$\lvert T_{p,c,n} \rvert \leq 2n^{2pc - 1}$$
\end{claim}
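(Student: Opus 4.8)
The plan is to induct on the top index $i$, establishing the slightly repackaged statement that $\lvert T_{i,c,n}\rvert \le 2n^{2ic-1}$ for every $i \in [p]$ and $n \ge 10$; the claim is then the instance $i = p$. The base case $i = 1$ is immediate from \Cref{claim:Tsizecase1}, since $\lvert T_{1,c,n}\rvert = 2n\cdot n^{c-1} - 1 = 2n^{c} - 1 \le 2n^{2c-1}$, where the last inequality uses $c \ge 1$.

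For the inductive step I fix $i \ge 2$ and abbreviate $M := \lvert T_{i-1,c,n}\rvert$, which by the inductive hypothesis satisfies $M \le 2n^{2(i-1)c-1}$. The first job is to read the recurrences for $\lvert T_{i,j,k}\rvert$ off the Case~2 table. Every Case~2 node contributes exactly two vertices, its root and the auxiliary vertex $b^{\neq}$, on top of the three subtrees $B^=, B^<, B^>$, so $\lvert T_{i,j,k}\rvert = 2 + \lvert B^=\rvert + \lvert B^<\rvert + \lvert B^>\rvert$. Substituting the four table entries and solving the resulting one-dimensional recurrences, I first obtain for the innermost conjunct $j = 1$ the formula
$$\lvert T_{i,1,k}\rvert = 3k + (k+1)M,$$
and then, writing $R_j := \lvert T_{i,j,n}\rvert$, the linear recurrence
$$R_j = n\,R_{j-1} + (n+1)M + 2n, \qquad R_1 = 3n + (n+1)M.$$

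Solving this recurrence yields $R_c = n^{c-1}R_1 + \bigl((n+1)M + 2n\bigr)\tfrac{n^{c-1}-1}{n-1}$. Using $\tfrac{n^{c-1}-1}{n-1} \le 2n^{c-2}$ together with the crude bounds $R_1 \le 5n^{2(i-1)c}$ and $(n+1)M + 2n \le 5n^{2(i-1)c}$ (both from $M \le 2n^{2(i-1)c-1}$, $n+1 \le 2n$, and $n \ge 10$), everything collapses to $R_c \le 6\,n^{(2i-1)c-1}$, because $2(i-1)c + c = (2i-1)c$. Finally $c \ge 1$ and $n \ge 10$ give $2n^{c} \ge 6$, so $6\,n^{(2i-1)c-1} \le 2n^{(2i-1)c-1+c} = 2n^{2ic-1}$, which closes the induction.

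The routine but error-prone part, and the step I would watch most carefully, is translating the four subcases of the Case~2 table into the two recurrences above and verifying that the crude constants (the $5$'s and the $6$) are absorbed by the single spare factor $n^{c}$; this is precisely where the hypothesis $n \ge 10$ is spent. Nothing here is deep, but it is the one place where an off-by-one in the per-node vertex count would leak into the exponent, so I would cross-check the Case~2 recurrences against the already-verified Case~1 recurrences of \Cref{claim:Tsizecase1} before trusting the arithmetic.
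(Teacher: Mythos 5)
Your proof is correct, but it takes a genuinely different route from the paper's. I checked your recurrences against the Case~2 table: with $\lvert T_{i,j,k}\rvert = 2 + \lvert B^=\rvert + \lvert B^<\rvert + \lvert B^>\rvert$ (and $\lvert B^<\rvert = 1$ when $j=1$) one gets exactly $\lvert T_{i,1,1}\rvert = 3+2M$, $\lvert T_{i,1,k}\rvert = 3+\lvert T_{i,1,k-1}\rvert+M$, $\lvert T_{i,j,1}\rvert = 2+2M+R_{j-1}$ and $\lvert T_{i,j,k}\rvert = 2+\lvert T_{i,j,k-1}\rvert+R_{j-1}+M$, which do yield your closed forms $3k+(k+1)M$ and $R_j = nR_{j-1}+(n+1)M+2n$, and your constant-absorption estimates all hold (in fact $n \ge 3$ would already suffice for them). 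The paper instead strengthens the statement to the three-parameter invariant $\lvert T_{i,j,k}\rvert \le 2kn^{2(i-1)c+2(j-1)}$ and proves it by simultaneous induction on $(i,j,k)$, so that each of Cases 2.1--2.4 is a one-line verification against the guessed bound; you induct only on $i$ and solve the inner recurrences exactly (the $k$-recursion in closed form, the $j$-recursion as a geometric sum). One point you use implicitly and should state: for fixed indices, all trees $T^{\psi}_{i,j,k}$ are isomorphic regardless of the feasibility condition $\psi$ (the paper says this explicitly), which is what justifies treating every copy of $T_{i-1,c,n}$ as having the one size $M$; without it you would replace your equalities by inequalities, which also works. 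What your route buys: the per-level bound $R_c \le 6n^{(2i-1)c-1}$ is sharper than the invariant the paper carries (which gives $2n^{2ic-1}$ at $(i,c,n)$), and iterating your exact recurrences shows the true size is $\Theta(n^{pc})$ up to constants, i.e., your induction closes with a spare factor of roughly $n^c$ per level and reveals slack in the stated bound. What the paper's route buys: uniform bounds on all intermediate subtrees $T_{i,j,k}$ with no recurrence-solving, and case-by-case arithmetic that is easier to audit mechanically.
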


\begin{proof}
We prove by induction that for $i \in [p], j \in [c], k \in [n]$
(except for the case $i = 1$ and $k = 1$ when $T_{i,j,k}$ is undefined):
$$\lvert T_{i,j,k} \rvert \leq 2kn^{2(i - 1)c + 2(j - 1)}$$

This holds for the case $i = 1$ because the formula from \Cref{claim:Tsizecase1} satisfies $$2kn^{j-1}-1 \leq 2kn^{j-1} \leq 2kn^{2(j - 1)}.$$

We consider cases 2.1 -- 2.4:\\
Case 2.1: For $i \in \{2, \dots, p\}$: $$\lvert T_{i,1,1} \rvert = 3 + 2\lvert T_{i-1,c,n} \rvert \leq 3 +
2\cdot 2n\cdot n^{2(i - 2)c+2(c - 1)} \leq 7n^{2(i - 1)c - 1} \leq 2n^{2(i-1)c}$$
Case 2.2: For $i \in \{2, \dots, p\}$ and $k \in \{2, \dots, n\}$:
\begin{align*}\lvert T_{i,1,k} \rvert &= 3 + \lvert T_{i,1,k-1} \rvert + \lvert
T_{i-1,c,n} \rvert \leq 3 + 2(k-1)n^{2(i-1)c} + 2n\cdot n^{2(i-2)c + 2(c-1)}\\
& = 2(k-1)n^{2(i-1)c} + (2 + 2n^{2(i-1)c - 1}) \leq
2k n^{2(i-1)c}\end{align*}
Case 2.3:  For $i \in \{2, \dots, p\}$ and $j \in \{2, \dots, c\}$:
\begin{align*}\lvert T_{i,j,1} \rvert &= 2 + 2\lvert T_{i-1,c,n} \rvert + \lvert T_{i,j-1,n} \rvert \\
& \leq 2 + 2\cdot 2n\cdot n^{2(i-2)c + 2(c-1)} + 2n\cdot n^{2(i-1)c + 2(j-2)}\\
& = 2 + 4n^{2(i-1)c - 1} + 2n^{2(i-1)c + 2(j - 1) - 1} \leq 2n^{2(i-1)c +
2(j-1)}\end{align*}
Case 2.4: For $i \in \{2, \dots, p\}$, $j \in \{2, \dots, c\}$ and $k
\in \{2, \dots, n\}$:
\begin{align*}\lvert T_{i,j,k} \rvert &= 2 + \lvert T_{i,j,k-1} \rvert + \lvert
T_{i,j-1,n} \rvert + \lvert T_{i-1,c,n} \rvert\\
	& \leq 2 + 2(k-1)\cdot n^{2(i-1)c + 2(j-1)} \\& + 2n\cdot n^{2(i-1)c + 2(j-2)} + 2n\cdot
n^{2(i-2)c + 2(c-1)} \\
	& = 2(k-1) n^{2(i-1)c + 2(j-1)} + 2\\& + 2n^{2(i-1)c + 2(j - 1) - 1} +
2n^{2(i-1)c - 1} \\
&\leq  2k n^{2(i-1)c + 2(j-1)}
\end{align*}

All the cases are checked and the statement of the claim follows.
\end{proof}

As noted above, if we have $p'$ leaves on top of the $p$ inner vertices $u_i$, the size of $T$ is at most (in fact less, except for trivial cases) the size of 
$T_{n,p+p',c}$. Hence the size of $T$ can be bounded by $\mathcal{O}(n^{2(p+p')c-1})$.

This finishes the proof of \Cref{l:technical} and also the proof of \Cref{thm:main}.

\section{A potential application of protocols with equality}
\label{sec:applications}

We prove in this section how one could turn lower bounds for protocols from this paper to lower bounds for the proof system R(LIN). The idea of using with equality for R(LIN) is due to D. Sokolov (personal communication with P. Pudlák). This result has not been published, therefore, we present it here for the sake of completeness.

We define R(LIN) in the spirit of Krajíček's book~\cite{krajicekproofcomplexity}. However, we use it only as a~proof system for DNF tautologies (i.e. a refutation system for CNF formulas). Therefore, the Boolean axioms and the weakening rule can be omitted.

\begin{definition}[R(LIN)]
The system R(LIN) is an extension of resolution. A~clause in the system R(LIN)
	is a~set $C = \{f_1, \dots, f_k\}$ where all elements $f_i \in \mathbf{F}_2[x_1, \dots, x_n]$ are
linear polynomials. As it is usual, the value 0 is interpreted as false
	and the value 1 as true.  Then the~clause $C$ is true for $\overline{a} \in \{0,1\}^n$ if and only if the Boolean formula $\bigvee_{i=1}^k f_i$ is true.

Let $\phi$ be a~CNF formula with the set of clauses $S$. Replacing each negative literal $\neg x_i$ with $1+x_i$, we obtain an equivalent set of clauses $S'$ for R(LIN).
	A R(LIN)-refutation of $\phi$ is then the derivation of the empty clause from the clauses $S'$ using the following two rules.

The contraction rule:
\begin{prooftree}
\AxiomC{$C,0$}
\UnaryInfC{$C$}
\end{prooftree}

The addition rule:
\begin{prooftree}
\AxiomC{$C,g$}
\AxiomC{$D,h$}
\BinaryInfC{$C,D,g+h+1$}
\end{prooftree}
\end{definition}

To translate the lower bounds from protocols to proofs, we need to 
convert a function $f$ which is hard to solve by a~protocol to a formula which is hard to refute.
The following lemma is essentially Lemma 13.5.1 from Krajíček's book \cite{krajicekproofcomplexity}.

\begin{lemma}
\label{l:functiontoformula}
Let $f\colon \{0,1\}^* \rightarrow \{0,1\}$ be a partial monotone function such that $U = f^{-1}\{0\}$ and $V = f^{-1}\{1\}$ are NP-sets.
Then for each $n \geq 1$, there exists a CNF formula $\phi(\overline{x}, \overline{y}) \wedge \psi(\overline{x}, \overline{z})$ such that:
\begin{enumerate}
\item variables from $\overline{y}$ and $\overline{z}$ are disjoint,
\item variables from $\overline{x}$ occur only negatively in $\phi(\overline{x}, \overline{y})$,
\item variables from $\overline{x}$ occur only positively in $\psi(\overline{x}, \overline{z})$,
\item \label{someitem} in each clause of $\psi(\overline{x}, \overline{z})$, there is at most one occurrence of a variable from $\overline{x}$,
\item the size of $\phi(\overline{x}, \overline{y}) \wedge \psi(\overline{x}, \overline{z})$  is polynomial in $n$,
\item $U \cap \{0,1\}^n = \left\{ \overline{a} \in \{0,1\}^n \middle| \phi(\overline{a},\overline{y}) \in SAT \right\}$,
\item $V \cap \{0,1\}^n = \left\{ \overline{b} \in \{0,1\}^n \middle| \psi(\overline{b},\overline{z}) \in SAT \right\}$.
\end{enumerate}
\end{lemma}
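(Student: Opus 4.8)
The plan is to build $\phi$ and $\psi$ from the standard Cook--Levin (tableau) / Tseitin encodings of nondeterministic polynomial-time verifiers for the NP-sets $U$ and $V$, and to use the monotonicity of $f$ only to control the polarity with which the shared variables $\overline{x}$ occur. The first observation is that monotonicity of $f$ makes $U=f^{-1}(0)$ downward closed and $V=f^{-1}(1)$ upward closed as subsets of $\{0,1\}^n$: if $\overline{a}\le\overline{a}'$ then $f(\overline a)\le f(\overline a')$, so $\overline{a}'\in U\Rightarrow\overline a\in U$ and $\overline a\in V\Rightarrow\overline a'\in V$. This closure is exactly what the polarity conditions demand, since any CNF in which $\overline x$ occurs only negatively (resp.\ only positively) has a satisfiability region that is automatically downward (resp.\ upward) closed in $\overline x$; this is the monotone structure exploited throughout feasible interpolation.

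Next I would fix polynomial-time verifiers for $U$ and $V$ and apply the usual transformation to obtain, for each $n$, polynomial-size CNFs $\mathrm{Comp}_U(\overline t,\overline y')$ and $\mathrm{Comp}_V(\overline t',\overline z')$, where $\overline t,\overline t'\in\{0,1\}^n$ carry the ``input'' while $\overline y',\overline z'$ carry the guessed certificate and the tableau, so that $\mathrm{Comp}_U(\overline b,\cdot)$ is satisfiable iff $\overline b\in U$, and likewise for $V$. The crucial trick — which is what makes the polarity come out right — is to \emph{decouple} the true input $\overline x$ from the computation by running each verifier on a guessed string rather than on $\overline x$ itself, linking the two by single-polarity comparison clauses. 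Concretely I set
\[
\phi(\overline x,\overline y)\ :=\ \mathrm{Comp}_U(\overline t,\overline y')\ \wedge\ \bigwedge_{i=1}^{n}(\neg x_i\vee t_i),\qquad
\psi(\overline x,\overline z)\ :=\ \mathrm{Comp}_V(\overline t',\overline z')\ \wedge\ \bigwedge_{i=1}^{n}(\neg t'_i\vee x_i),
\]
where $\overline y=(\overline t,\overline y')$ and $\overline z=(\overline t',\overline z')$ are taken on disjoint variable sets (giving item~1). The comparison clauses encode $\overline x\le\overline t$ for $\phi$ and $\overline t'\le\overline x$ for $\psi$. Now $\overline x$ occurs in $\phi$ only as $\neg x_i$ (item~2) and in $\psi$ only as $x_i$, with exactly one $\overline x$-literal per comparison clause and none inside $\mathrm{Comp}_V$ (items~3 and~4); the size bound (item~5) is immediate since the verifier formulas are polynomial in $n$ and only $n$ comparison clauses are added.

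It remains to verify the exact identities in items~6 and~7, and this is where closure is used. For item~6, $\phi(\overline a,\cdot)$ is satisfiable iff there is $\overline t\ge\overline a$ with $\overline t\in U$; taking $\overline t=\overline a$ shows $\overline a\in U$ suffices, and conversely any $\overline t\ge\overline a$ with $\overline t\in U$ forces $\overline a\in U$ by downward closure, so the satisfiability region is exactly $U\cap\{0,1\}^n$. The argument for item~7 is symmetric, using upward closure of $V$. I expect the main obstacle to be precisely this last step: guaranteeing the exact set equalities simultaneously with the strict polarity constraints. The tension is that forcing $\overline x$ to occur with a single polarity inevitably makes the satisfiability region monotone in $\overline x$, so the construction can only be correct because $U$ and $V$ are genuinely closed; the decoupling through the guessed strings $\overline t,\overline t'$ is the device that reconciles the one-sided comparison clauses with the two-sided way $\overline x$ would otherwise enter a Cook--Levin encoding. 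As a byproduct, the two satisfiability regions are $U$ and $V$, which are disjoint because $f$ is a function, so $\phi\wedge\psi$ is unsatisfiable, which is what the interpolation application in this section will need.
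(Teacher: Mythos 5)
Your route is genuinely different from the paper's. The paper does not reprove the bulk of the lemma at all: it cites Krajíček's Lemma 13.5.1 for items 1--3 and 5--7, and only proves item \ref{someitem} by a syntactic transformation, splitting each clause of $\psi$ that contains $k$ literals from $\overline{x}$ into $k$ clauses chained together by fresh auxiliary variables (which are added to $\overline{z}$, where no polarity constraint applies), so that each resulting clause carries exactly one $\overline{x}$-literal. You instead build everything from scratch: Cook--Levin encodings $\mathrm{Comp}_U$, $\mathrm{Comp}_V$ run on \emph{guessed} inputs $\overline{t}, \overline{t}'$, linked to the real input by the one-sided comparison clauses $(\neg x_i \vee t_i)$ and $(\neg t'_i \vee x_i)$. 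This decoupling is a nice device: items 2--4 hold by construction --- in particular item \ref{someitem}, the only part the paper actually proves, comes for free because $\mathrm{Comp}_V$ contains no $\overline{x}$-variables and each comparison clause contains exactly one --- at the price that you must verify items 6--7 yourself, which the paper delegates to the citation.

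That verification is where you have a gap. You claim that monotonicity of $f$ makes $U$ downward closed and $V$ upward closed, but $f$ is a \emph{partial} function: if $\overline{a} \leq \overline{a}'$ and $f(\overline{a}') = 0$, the value $f(\overline{a})$ may simply be undefined, so $\overline{a} \notin U$. All that monotonicity gives is the separation property that no element of $V$ lies below an element of $U$. Consequently your satisfiability regions are the downward closure of $U$ and the upward closure of $V$, and the exact equalities in items 6--7 fail whenever $U$, $V$ are not closed. In fairness, the lemma as literally stated cannot hold in that case either: items 2 and 6 together force $\{\overline{a} : \phi(\overline{a}, \overline{y}) \in SAT\}$ to be downward closed, so equality with a non-closed $U$ is impossible for \emph{any} construction, and the statement implicitly targets the closed (e.g.\ total) case. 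Moreover, the application in this section only uses the inclusions $U \cap \{0,1\}^n \subseteq \{\overline{a} : \phi(\overline{a},\overline{y}) \in SAT\}$ and $V \cap \{0,1\}^n \subseteq \{\overline{b} : \psi(\overline{b},\overline{z}) \in SAT\}$ together with unsatisfiability of $\phi \wedge \psi$, and your construction does deliver these: a common satisfying assignment with $\overline{x} = \overline{a}$ would yield $\overline{t}' \leq \overline{a} \leq \overline{t}$ with $f(\overline{t}) = 0$ and $f(\overline{t}') = 1$, contradicting monotonicity directly --- note this uses separation, not closure, and also that your closing remark ``disjoint because $f$ is a function'' is too weak, since the regions are the closures of $U$ and $V$, not $U$ and $V$ themselves. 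Replace the closure claim by this separation argument (and either restrict items 6--7 to the total case or weaken them to inclusions), and your proof is correct.
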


\begin{proof}
Except for \Cref{someitem}, this is Lemma 13.5.1 from \cite{krajicekproofcomplexity}.

	We replace each clause of $\phi(\overline{x}, \overline{z})$ by an equivalent set of clauses.
	For example, clause $\{x_1, x_2, x_3, z_1, \neg z_2, z_3\}$ is transformed into an equivalent set of clauses $\{z_1, \neg z_2, z_3, x_1, z_4\}$, $\{\neg z_4, x_2, z_5\}$, $\{\neg z_5, x_3\}$
	where $z_4, z_5$ are new variables.  Generally: Consider a clause of
	the form $\{\ell_1, \dots, \ell_k, \ell'_1, \dots, \ell'_m\}$ where
	$\ell_1, \dots, \ell_k$ are literals from $\overline{x}$ and $\ell'_1,
	\dots, \ell'_m$ are literals from $\overline{z}$. Such a~clause is
	replaced by $k$ clauses: $\{\ell'_1, \dots, \ell'_m, \ell_1,
	\ell''_1\}$, $\{\neg \ell''_1, \ell_2, \ell''_2\}$, \dots,$\{\neg
	\ell''_{k-2}, \ell_{k-1}, \ell''_{k-1}\}$, $\{\neg \ell''_{k-1}, \ell_k\}$ where $\ell''_1, \dots, \ell''_{k-1}$ are  new variables put into $\overline{z}$.
\end{proof}

Observe that $\phi(\overline{x}, \overline{y}) \wedge \psi(\overline{x}, \overline{z})$ is unsatisfiable because $U \cap V = \emptyset$.

We are now ready to state the theorem for R(LIN) and protocols with equality.

\begin{theorem}
Let $f\colon \{0,1\}^* \rightarrow \{0,1\}$ be a monotone partial function such that $U = f^{-1}(0)$ and $V = f^{-1}(1)$ are NP-sets.
	Then it holds for all $n$: If $s(n)$ is the minimum size of a DAG
communication protocol of degree 2 with equality solving $f_n =
\left.f\right|_{\{0,1\}^n}$, then there is a formula $\alpha^f_n$ of size
$n^{O(1)}$ such that the minimum number of lines of a R(LIN)-refutation of
$\alpha^f_n$ is $s(n)$.
\end{theorem}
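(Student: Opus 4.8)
The plan is to establish the two inequalities $s(n)\le(\text{min.\ number of refutation lines})$ and $(\text{min.\ number of refutation lines})\le s(n)$ by giving size-preserving simulations in both directions, taking $\alpha^f_n:=\phi(\overline x,\overline y)\wedge\psi(\overline x,\overline z)$ to be the CNF supplied by \Cref{l:functiontoformula} (unsatisfiable because $U\cap V=\emptyset$). Throughout, a party holding $\overline a\in U\cap\{0,1\}^n$ fixes a witness $\overline y_0=\overline y_0(\overline a)$ making $\phi(\overline a,\overline y_0)$ true, a party holding $\overline b\in V\cap\{0,1\}^n$ fixes a witness $\overline z_0=\overline z_0(\overline b)$ making $\psi(\overline b,\overline z_0)$ true, and I write $\alpha=\alpha(\overline a,\overline b)$ for the total assignment sending $(\overline x,\overline y,\overline z)$ to $(\overline a,\overline y_0,\overline z_0)$.

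I would first carry out the direction from refutations to protocols, which is the one driving the lower-bound application. Given an R(LIN)-refutation with $m$ lines I turn every clause $C=\{f_1,\dots,f_k\}$ into a vertex and reverse the edges, so the empty clause becomes the source and the initial clauses become the sinks; addition gives out-degree $2$ and contraction out-degree $1$, so the degree is $2$. I declare a vertex feasible for $(\overline a,\overline b)$ exactly when its clause is false under $\alpha$, i.e.\ when every $f_t$ vanishes. Splitting $f_t=\ell^t_x+\ell^t_y+\ell^t_z+\varepsilon_t$ into its $\overline x$-, $\overline y$-, $\overline z$-parts, the vanishing of $f_t$ under $\alpha$ is the $\mathbb F_2$-identity $\ell^t_x(\overline a)+\ell^t_y(\overline y_0)+\varepsilon_t=\ell^t_z(\overline z_0)$, whose left side is a function $A_t(\overline a)$ and whose right side is a function $B_t(\overline b)$. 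Hence the vertex is feasible iff $(A_1(\overline a),\dots,A_k(\overline a))=(B_1(\overline b),\dots,B_k(\overline b))$, which, read as binary integers, is a single equality $q_C(\overline a)=r_C(\overline b)$ as in \Cref{def:pequality}.

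The protocol axioms then come from soundness of R(LIN): the empty clause is false, so the source is feasible for every $\overline a,\overline b$; and if the conclusion of an addition or contraction step is false under $\alpha$ then some premise is, which is exactly the propagation condition (b) of \Cref{def:pgeneral}. For the sinks I use the monotone shape of \Cref{l:functiontoformula}: under $\alpha$ every clause of $\phi$ is satisfied (since $\phi(\overline a,\overline y_0)$ holds), so only clauses of $\psi$ can be feasible sinks; by property~(\ref{someitem}) such a clause carries at most one $\overline x$-literal, and it is positive, say $x_i$. Being satisfied by $(\overline b,\overline z_0)$ but false under $\alpha$ forces the satisfied literal to be $x_i$ with $b_i=1$, while falsity under $\alpha$ gives $a_i=0$; thus feasibility of the sink implies $a_i=0\wedge b_i=1$. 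Enlarging each sink's feasibility to the canonical relation \mbox{``$a_i=0\wedge b_i=1$''} (itself an equality after a trivial recoding of values) preserves condition (b), since it only enlarges feasible sets, and makes every sink conform to condition (ii) of \Cref{def:pgeneral}. This yields a degree-$2$ protocol with equality solving $f_n$ with exactly $m$ vertices, giving $s(n)\le m$.

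The converse simulation, turning a protocol of degree $2$ with equality of size $s$ into an R(LIN)-refutation with at most $s$ lines, is where the real work lies and is the step I expect to be the \emph{main obstacle}. The natural plan is to reverse the dictionary above: process the protocol DAG from its sinks toward the source, attach to each vertex $v$ a clause $C_v$ obtained by a single R(LIN) step from the clauses of its children, and maintain the invariant that $v$ feasible implies $C_v$ false under $\alpha$; the source, being feasible everywhere, then receives the empty clause, which is the desired refutation. The difficulty is that \Cref{def:pequality} permits $q_v,r_v$ to be \emph{arbitrary} real (equivalently, arbitrary Boolean) functions, whereas through ``false under $\alpha$'' an R(LIN) clause can only realise equalities whose two sides are $\mathbb F_2$-affine in $(\overline a,\overline y_0)$ and in $\overline z_0$. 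Bridging this gap is the crux: I would argue that the shared variables $\overline x$ carry all the information separating $U$ from $V$, put the protocol into a normal form in which each equality test compares $\mathbb F_2$-affine combinations of the players' inputs (again exploiting the monotone structure of \Cref{l:functiontoformula}), handle the base case by reading the sink's index $i$ into a suitable initial clause, and verify case by case that the two children of every vertex combine by exactly one sound addition or contraction, so that the line count matches the vertex count.
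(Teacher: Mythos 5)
Your first simulation (refutations to protocols) is correct and is essentially the paper's own proof: reverse the derivation graph so the empty clause is the source and the initial clauses are the sinks, declare a vertex feasible when its clause is falsified by the assignment $(\overline a,\overline y_a,\overline z_b)$ built from fixed NP-witnesses, split each linear polynomial over $\mathbb{F}_2$ into the part depending on $\overline x,\overline y$ (plus the constant) and the part depending on $\overline z$, and concatenate the resulting bit-vectors into a single equality $q_v(\overline a)=r_v(\overline b)$ as in \Cref{def:pequality}; soundness of the addition and contraction rules gives condition (b) of \Cref{def:pgeneral}, and property 4 of \Cref{l:functiontoformula} drives the sink analysis. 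Your explicit step of replacing each sink's relation by the canonical one $a_i=0\wedge b_i=1$ is, if anything, slightly more careful than the paper, which asserts an ``if and only if'' at the sinks but proves only one implication; the paper instead handles the structural requirement by removing the never-feasible vertices (those derived purely from clauses of $\phi$, and implicitly the $\overline x$-free clauses of $\psi$), and either fix is fine.

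The converse simulation you attempt is where you diverge from the paper, and the divergence stems from reading the theorem more strongly than its proof supports. Despite the phrasing ``is $s(n)$,'' the paper proves only that every R(LIN)-refutation of $\alpha^f_n$ has at least $s(n)$ lines: its argument assumes a refutation with $s<s(n)$ lines, builds a protocol of size at most $s$, derives a contradiction, and stops. That one-sided bound is exactly what the intended application needs (lower bounds for protocols with equality transfer to R(LIN), as the abstract states). The obstacle you flag in the converse is genuine and, within this paper, unresolved: \Cref{def:pequality} permits arbitrary functions $q_v,r_v$, whereas feasibility conditions arising from clauses of $\alpha^f_n$ are confined to $\mathbb{F}_2$-affine equalities, and no normal form forcing affineness is available; your sketch (``put the protocol into a normal form\dots verify case by case\dots'') names this crux without bridging it. Had it gone through, it would make protocol-with-equality size and R(LIN) refutation length polynomially equivalent on these formulas---a far stronger claim than anything asserted here. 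So drop the converse and read the conclusion as a lower bound; your first half already contains everything the paper's proof establishes.
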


\begin{proof}
Let us fix $n$. We use the formula $\phi(\overline{x}, \overline{y}) \wedge \psi(\overline{x}, \overline{z})$ from \Cref{l:functiontoformula} as $\alpha^f_n$.

For the sake of contradiction, we assume that there is a R(LIN)-refutation $d$
of $\phi(\overline{x}, \overline{y}) \wedge \psi(\overline{x}, \overline{z})$ whose number
of lines is $s < s(n)$.  We construct a protocol $P$ of degree 2 with equality of size at most $s$ solving
$f_n$. The underlying graph of the protocol $P$ is the graph of the proof $d$ with the direction of edges reversed.  For each $\overline{a} \in f^{-1}\{0\}$, we fix $\overline{y_a}$ such that
$\phi(\overline{a},\overline{y_a})$ is true. Analogously, for each 	$\overline{b} \in f^{-1}\{1\}$,
we fix $\overline{z_b}$ such that $\psi(\overline{b},\overline{z_b})$ is true.

Let us consider a vertex $v$ with the clause $f_1, \dots, f_k$. The
functions $q_v$, $r_v$ will be set so that the vertex $v$ is
feasible for $\overline{a}, \overline{b}$ iff the assignment $\overline{a}, \overline{y_a},
\overline{z_b}$ falsifies $f_1, \dots, f_k$.  For each $i$, we split $f_i = f_i^0
+ f_i^1$ so that $f_i^0$ contains only variables from $\overline{x},
\overline{y}$ and the constant 1 if present, while $f_i^1$ contains
only variables from $\overline{z}$.  It is then enough to set
$q_v(\overline{a})$ to the sequence of bits $v_1v_2\dots v_k$ where $v_i$ is the
value of $f^0_i$ under the assignment $\overline{a}, \overline{y_a}$ and to set $r_v(\overline{b})$ to
the sequence of bits $w_1w_2\dots w_k$ where $w_i$ is the value of
$f^1_i$ under the assignment $\overline{z_b}$.

First, observe that the source of the graph is always feasible because the empty clause is falsified by any assignment.
Next, we observe that if a vertex $v$ is feasible for $\overline{a}$ and $\overline{b}$, then at
least one of the sons is feasible for $\overline{a}$ and $\overline{b}$, too. This is trivial for the contraction rule of R(LIN). Let us consider the vertex $v$ with the clause
$C,D,f+g+1$ and its two sons with clauses $C,f$ and $D,g$. Because $f+g+1$ is false under the assignment $\overline{a}, \overline{y_a}, \overline{z_b}$, at least one of $f$, $g$ must be false under the same assignment. The clauses $C$, $D$ are trivially false under this assignment. Therefore at least one of the sons is indeed feasible.

Let us consider a vertex $v$ such that all of its descendants come
exclusively from the clauses of $\phi(\overline{x},\overline{y})$. If the clause of $v$ is
falsified by the assignment $\overline{a}, \overline{y_a}$, it contradicts the fact that
$\phi(\overline{a}, \overline{y_a})$ is true. Therefore the vertex $v$ is never feasible. We remove all such vertices. The only remaining thing is to verify that the sinks give us a correct solution to the KW game. Each sink $\ell$ must have a clause $C$ from $\psi(\overline{x},\overline{z})$. We claim that the sink is feasible if and only if $a_i = 0 \wedge b_i = 1$ where $x_i$ is the (only) positive literal of $x$ appearing in the clause $C$. Feasibility of $\ell$ implies that all literals in $C$ are falsified by the assignment $\overline{a}, \overline{z_b}$. But $C$ is true under the assignment $\overline{b},\overline{z_b}$. Therefore, $a_i = 0 \wedge b_i = 1$ as required.

We have constructed a protocol solving $f_n$ of size at most $s$. Because $s < s(n)$, this is a contradiction.
\end{proof}

\section*{Acknowledgements}
I want to thank Pavel Pudlák for his guidance through this work. I am also grateful to Susanna F. de Rezende for a discussion and comments. This work was supported by the GAUK project 496119.

\printbibliography

\end{document}